\documentclass{article}

    \PassOptionsToPackage{numbers, compress}{natbib}

 \usepackage[preprint]{neurips_2026}

\usepackage[utf8]{inputenc}
\usepackage[T1]{fontenc}
\usepackage{hyperref}
\usepackage{url}
\usepackage{booktabs}
\usepackage{amsfonts}
\usepackage{nicefrac}
\usepackage{microtype}
\usepackage{amsmath} 
\usepackage{float}
\usepackage[linesnumbered,ruled,vlined]{algorithm2e}
\usepackage{graphicx}
\usepackage{booktabs}
\usepackage{multirow}
\usepackage{xcolor}
\usepackage{caption}
\definecolor{phasecolor}{RGB}{30,90,160}
\usepackage{wrapfig}
\usepackage{booktabs}
\usepackage{tabularx}
\usepackage{array}
\usepackage{makecell}
\newcolumntype{Y}{>{\raggedright\arraybackslash}X}

\newcommand{\phase}[1]{
    \vspace{2pt}
    {\color{phasecolor}\textbf{#1}}
    \vspace{2pt}
}

\usepackage{amsthm}

\newtheorem{theorem}{Theorem}[section]
\newtheorem{lemma}[theorem]{Lemma}
\newtheorem{proposition}[theorem]{Proposition}

\title{HACO: Hedged Agent Computing for Reliable LLM Systems}

\author{
  Enhan Li$^{1}$,   \quad
  Hongyang Du$^{1}$\thanks{Corresponding author.}
  \AND
  \normalfont
  $^{1}$The University of Hong Kong \quad
}

\begin{document}

\maketitle

\begin{abstract}
As large language model (LLM) agents move from isolated prompting to long-horizon workflows, failures increasingly arise at the role-to-instance binding boundary, where task-specific role requests must be assigned to concrete agent instances under current service, network, and query conditions.
Existing agent system research has improved role specialization, workflow topology, memory, and tool use, but often assumes a fixed stable execution environment. This assumption limits deployed reliability, because the same role request can exhibit different latency, failure probability, and output quality across agent instances operating under different service regions and network conditions.
We propose Hedged Agent Computing (HACO), a runtime control scheme that treats each role request as a reliability-constrained selection problem over candidate agent instances, each coupling a role type, an LLM, and a concrete execution environment. 
Different from routing, HACO adaptively selects a hedge set of candidates for each invocation. Its allocation rule combines optimistic ranking, which prioritizes candidates with high estimated quality, reliability, and informative uncertainty, with conservative reliability accumulation, which stops selection only after the hedge set reaches a target success probability. Through experience harvesting, HACO updates candidate and link profiles from all executed candidate traces, including quality, success, latency, and network statistics.
Experiments on various benchmarks, together with runtime degradation studies, show that HACO improves robustness and output quality under changing deployment conditions, while using lower token and latency cost than exhaustive parallel execution.
\end{abstract}

\section{Introduction}

\begin{wrapfigure}{r}{0.35\linewidth}
\vspace{-8pt}
\centering
    \includegraphics[width=0.4\textwidth]{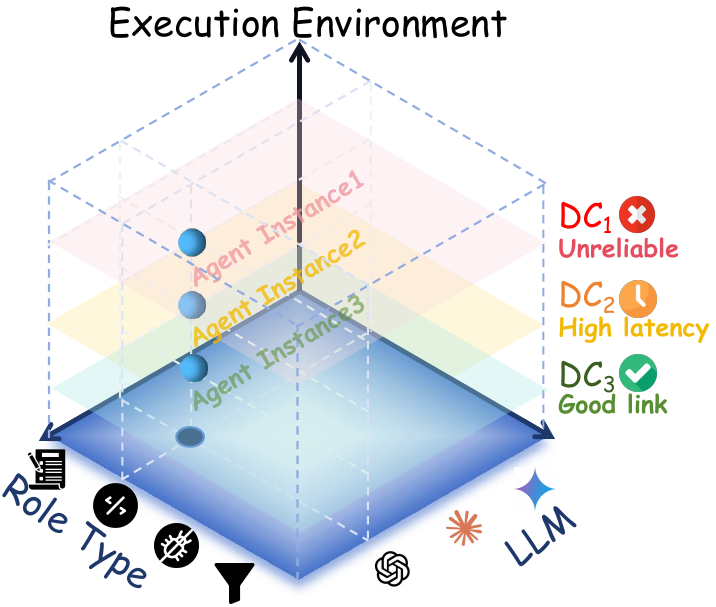}
    \caption{Three coupled axes of role invocation in MAS.}
    \label{fig:intro-coupled-axes}
\vspace{-10pt}
\end{wrapfigure}
LLM-based agent systems, especially multi-agent systems (MAS), have become a common paradigm for complex task execution and advanced automation~\citep{hong2023metagpt,qian2024chatdev,fourney2024magenticone,yan2025beyond}.
Compared with single-LLM execution, agent systems can distribute a task across specialized role types, coordinate intermediate outputs, cross-check partial results, and repair errors through iterative feedback. These coordination gains are especially valuable for tasks that require goal decomposition, intermediate coordination, iterative refinement, and long-horizon execution. 
As a result, agent systems have become an important direction for building practical AI systems, especially in domains such as data analysis, decision support, and scientific discovery.

This shift has also changed how practical agents are designed.
Modern agent systems are increasingly built with complete agent harnesses instead of standalone models with prompts \cite{anthropic2025harness,schmid2026agentharness}.
In practice, reliable execution depends on surrounding components such as tool interfaces, memory management, safety controls, and execution orchestration.
Recent work therefore places growing emphasis on persistent state~\cite{zhang2025survey}, controller design~\cite{mei2025survey}, execution layers, and tool coordination~\cite{qu2025tool}, because these components often determine whether agents can maintain progress through long and error-prone workflows.
However, this system perspective remains primarily organized around software-level MAS design, including role type specialization and system topology of agents.
When validating these designs, they often vary LLM backbones in the agent instances to show performance differences and robustness under the same MAS design~\cite{2025datawiseagent}.
As Fig.~\ref{fig:intro-coupled-axes} illustrates, once the system topology is fixed, this view attributes uncertainty mainly to the agent role type or the LLM's capability, such as reasoning or tool-use quality, while treating the execution environment of the agent instance as a fixed background condition.
The coupling among the capability of the selected LLM, its role type and the execution environment receives comparatively little attention.

However, the fixed-environment assumption breaks down in real deployments.
Even when the MAS software topology and role invocation remain unchanged, the final behavior depends heavily on the agent instance that performs the concrete operation. 
Specifically, a role invocation may pass through different service providers and execution regions. Its behavior is shaped by both LLM capability and physical environment factors, e.g., network latency, bandwidth fluctuation, transient failures, resource contention, and service interruptions, of the selected agent instance.
Consequently, two invocations that are identical at the software level may exhibit substantially different latency, execution reliability, and outcomes under different execution environments.
In practice, the role type is typically determined by the MAS design, while LLM choice and runtime conditions remain operational variables.
For example, modern LLM serving platforms expose region and deployment choices as runtime or configuration variables, so the same role invocation request may not always execute in the same physical location~\cite{azure_foundry_deployment_types,vertex_deployments_endpoints,bedrock_cross_region,openai_data_controls} (More examples are given in Appendix~\ref{sec:cross-region-gap}).
Moreover, recent infrastructure incidents, including Red Sea submarine-cable disruptions and Azure's West Europe thermal event, show that physical and regional execution conditions can directly affect service latency and availability~\cite{cnn2024redsea_cables,cnbc2025azure_redsea,azure2025west_europe_thermal}.
Therefore, agent systems should not attribute uncertainty only to model reasoning or software logic, but should also account for execution location, network paths, and region-level infrastructure state.

We therefore view role invocation as a coupled runtime decision over role type, LLM choice, and execution environment, and extend the conventional agent-instance abstraction by adding execution environment as a third axis. 
Inspired by the success of redundancy and fault-tolerance mechanisms in distributed systems, we use hedged concurrent execution to manage invocation-level uncertainty in computing and communication environments. 
HACO focuses on the role-invocation allocation point, which is different from single-route selection in \textit{LLM routing}~\citep{ding2024hybrid,ong2024routellm}, post-hoc escalation in \textit{fallback or cascade} methods~\citep{yue2023large,gupta2024language}, serving-batch construction in \textit{batch inference}~\citep{kwon2023vllm,zhong2024distserve}, and output selection or synthesis in \textit{MoA-style} or \textit{Best-of-$N$ aggregation}~\citep{jiang2023llmblender,wang2025mixture,wang2023selfconsistency}.
It asks a preceding execution-control question: before a role request is executed, whether multiple candidate agent instances should execute the same request, which candidates should be activated, and when the allocated redundancy is sufficient. 
Appendix~\ref{appcom} provides a structured comparison with related runtime and inference-time paradigms.
Motivated by these insights, we propose \textbf{Hedged Agent COmputing (HACO)}, where
the core selection unit is a \emph{candidate agent instance}, i.e., a concrete role-specific LLM deployment in an execution environment; we use \emph{candidate} as shorthand thereafter.
HACO allocates a small hedge set of candidates and uses the resulting execution traces to refine future allocation decisions.
Our contributions are summarized as:
\begin{itemize}
  \item We model role invocation in LLM-based agent systems as a three-axis runtime decision over role type, LLM choice, and execution environment. This formulation exposes the role-to-instance binding boundary as a source of reliability, latency, and quality variation under concrete runtime conditions.

\item We introduce HACO, a redundancy-based runtime control paradigm that selects reliability-constrained hedge sets over heterogeneous agent instances. Through experience harvesting, HACO refines candidate and link profiles, allowing the system to adapt redundancy to runtime conditions and balance robustness against execution cost.

\item We evaluate HACO across representative benchmarks and stress settings, showing that it improves robustness and output quality while offering a controllable trade-off among system reliability, task success, and execution cost measured by token usage and latency.
\end{itemize}

\section{Related Work and Theory Motivation}
\label{sec:related_work}
\paragraph{Agent harnesses.}
Recent work increasingly treats LLM agents as complete execution systems supported by agent harnesses, where the surrounding layer manages context construction, memory access, tool invocation, validation, and lifecycle control~\cite{anthropic2025harness,schmid2026agentharness,zhang2025survey,mei2025survey,qu2025tool}.
This harness view shows that practical agent performance depends on system mechanisms that maintain progress across long-horizon and failure-prone workflows, beyond the capability of the underlying model in a single turn. 
HACO builds on this system-level perspective, but studies a different control problem, i.e., how to allocate redundant execution when runtime conditions are uncertain.

\paragraph{LLM-based multi-agent systems.}
Research on LLM-based MAS mainly focuses on how agents are organized, how they communicate, and how they are controlled during complex tasks~\cite{he2025llm,yan2025beyond,raza2026trism}.
Representative systems include workflow-based role specialization such as MetaGPT~\cite{hong2023metagpt}, communication-centric collaboration such as ChatDev~\cite{qian2024chatdev}, and orchestrator-led delegation such as Magentic-One~\cite{fourney2024magenticone}.
These works improve collaboration structure, communication protocols, and controller design.
HACO addresses a complementary question, i.e., given a fixed MAS workflow, how should each role invocation be executed when candidate capability, network condition, and execution reliability vary at runtime.

\paragraph{Redundant and parallel agent execution.}
Redundancy is a common mechanism for improving reliability in deployed systems, but it also introduces execution cost~\cite{caict_alibaba_2025_genai_architecture}. 
Existing agent methods mainly use parallelism for full generation with post-hoc selection or structured wide search. For example, MoA-style method executes all candidates and selects the final output after generation~\cite{wang2025mixture}, while A-MapReduce and AgentSwing use structured parallelism for wide search and long-horizon branch routing~\cite{chen2026mapreduce,feng2026agentswing}. Agent-diversity studies further show that parallel agents can provide complementary information~\cite{yang2026understanding}. These methods demonstrate the value of parallel execution, but do not decide how much redundancy each role invocation needs under changing runtime conditions.
Redundancy is theoretically supported as a stabilizing mechanism under uncertainty and can reduce failure risk in multi-agent execution~\cite{bi2025redundancy}, but excessive redundancy increases token usage, latency, and coordination cost. HACO builds on this principle by treating redundancy as a query-dependent runtime control variable and formulating allocation as a reliability-constrained subset-selection problem:
\begin{equation}
\min_{\mathcal{S}\subseteq\mathcal{A}} \sum_{a_i\in\mathcal{S}} C_i
\quad
\text{s.t.}
\quad
\mathbb{P}(\text{qualified success}\mid \mathcal{S}) \ge \tau .
\end{equation}
Here, $\mathcal{S}$ is the hedge set, $\tau$ is the target reliability, and $C_i$ penalizes the candidate activation footprint, not realized wall-clock latency; the realized latency is $t(a_{\mathrm{winner}})$ in~\ref{sec:phase3}.

\section{The HACO Paradigm}
\label{sec:haco}

HACO treats redundancy as a runtime control variable, deciding how many and which candidate agent instances to activate for each role invocation. As shown in Fig.~\ref{fig:overview}, HACO consists of four stages, i.e., role abstraction, redundancy allocation, hedged execution, and experience harvesting.

\begin{figure}[t]
    \centering
    \includegraphics[width=1.0\textwidth]{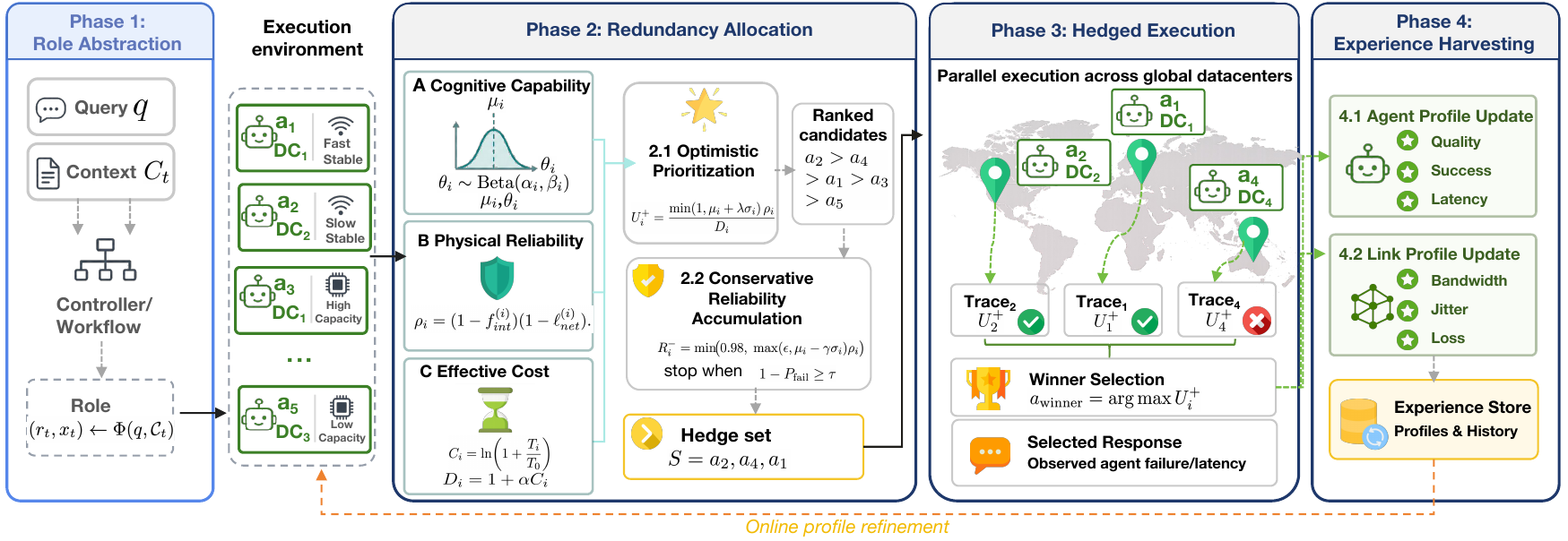}
    \caption{\textbf{Overview of the HACO paradigm.} 
    }  
    \label{fig:overview}
\end{figure}

\subsection{Phase 1: Role Abstraction}

HACO starts from role invocation events produced by the underlying MAS.
A role type refers to a functional component such as planner, coder, debugger, or filter.
At step $t$, a role invocation event is
\begin{equation}
e_t=(r_t,x_t),
\label{eq:role-invocation-event}
\end{equation}
where $r_t$ is the role type to invoke and $x_t$ is the input payload that is constructed from the original query $q$ and the current task context $\mathcal{C}_t$, including intermediate plans, execution results, error messages, and outputs from previous roles.
For example, a \textit{debugger} payload may contain the original query together with the failed \textit{coder} output, traceback, and current notebook state.

This abstraction connects the MAS controller to HACO's redundancy allocation mechanism.
Given Eq.~\eqref{eq:role-invocation-event}, HACO uses $r_t$ to identify the candidate pool and sends the same payload $x_t$ to the selected candidate agent instances.
The controller could be a predefined workflow, a finite-state machine, a decentralized negotiation protocol, a MetaGPT-style workflow~\cite{hong2023metagpt}, or a single-agent controller, depending on MAS design.
Without loss of generality, our experiments instantiate HACO with a relatively complex FSM-based controller derived from DATAWISE, as detailed in Appendix~\ref{sec: datawise_mas}. HACO itself only requires the controller to emit role invocation events in the form of Eq.~\eqref{eq:role-invocation-event}.

\subsection{Phase 2: Redundancy Allocation}
Given a role invocation, HACO selects a hedge set $\mathcal{S} \subseteq \mathcal{A}$ that meets a target reliability level with low latency-aware resource cost. To this end, HACO characterizes each candidate with uncertainty-aware utility components and allocates redundancy through dual-bound coordination.

\paragraph{Candidate Characterization.}
We characterize each candidate $a_i$ along three dimensions, i.e., uncertain cognitive capability, physical execution reliability, and latency-aware resource cost.

\emph{1. Cognitive Capability.}
Let $\theta_i \in [0,1]$ denote the expected normalized output-quality capability of candidate $a_i$. HACO models $\theta_i$ using a Beta prior, a standard choice for variables restricted to the unit interval~\cite{ferrari2004beta, kull2017beta, ma2024beta}:
\begin{equation}
\theta_i \sim \mathrm{Beta}(\alpha_i,\beta_i),
\label{eq:capability-prior}
\end{equation}
which is initialized with $\alpha_i=\beta_i=1$. After observing a normalized output-quality score $q_i \in [0,1]$, HACO applies a lightweight soft-count update:
\begin{equation}
\alpha_i \leftarrow \alpha_i + q_i,
\qquad
\beta_i \leftarrow \beta_i + (1-q_i).
\label{eq:capability-update}
\end{equation}
The posterior mean and standard deviation are
\begin{equation}
\mu_i=\frac{\alpha_i}{\alpha_i+\beta_i},
\qquad
\sigma_i=
\sqrt{
\frac{\alpha_i\beta_i}
{(\alpha_i+\beta_i)^2(\alpha_i+\beta_i+1)}
}.
\label{eq:posterior-stats}
\end{equation}
Here, $\mu_i$ estimates the expected capability of the selected candidate $a_i$, while $\sigma_i$ captures uncertainty.

\emph{2. Physical Execution Reliability.}
Let $f_{\mathrm{int}}^{(i)}$ denote the internal failure rate of candidate $a_i$, and let $\ell_{\mathrm{net}}^{(i)}$ denote the network loss rate. We define the execution success probability as
\begin{equation}
\rho_i = (1-f_{\mathrm{int}}^{(i)})(1-\ell_{\mathrm{net}}^{(i)}).
\label{eq:physical-reliability}
\end{equation}
For tractability, HACO considers conditional independence across candidates when estimating Eq.~\ref{eq:physical-reliability}.

\emph{3. Latency-Aware Resource Cost.}
For each candidate, HACO estimates an effective latency by combining the candidate's processing time and the expected communication overhead:
\begin{equation}
T_i=
t_{\mathrm{proc}}^{(i)}+
\frac{k_i}{\max(\epsilon_{\rm B}, b_i-\kappa j_i)}.
\label{eq:effective-latency}
\end{equation}
Here, $t_{\mathrm{proc}}^{(i)}$ denotes the historical average processing latency of candidate $i$, $k_i$ is the estimated message size, $b_i$ is the observed effective bandwidth, and $j_i$ is the unit bandwidth jitter.
The coefficient $\kappa>0$ models how strongly bandwidth jitter reduces the effective bandwidth term.

To reflect the diminishing sensitivity of user-perceived delay, HACO defines a scale-normalized logarithmic latency cost with the estimated latency \(T_i\) based on the Weber--Fechner law~\cite{reichl2010logarithmic}:
\begin{equation}
C_i=\ln\!\left(1+\frac{T_i}{T_0}\right),
\label{eq:latency-cost}
\end{equation}
where $T_0$ is a reference latency, instantiated as the median estimated latency among candidates for the current routing step. HACO converts this latency-aware resource cost into a stable discount factor:
\begin{equation}
D_i = 1 + \eta C_i,
\label{eq:delay-discount}
\end{equation}
where $\eta$ controls the strength of latency penalization. Compared with directly using $C_i$, this form preserves the logarithmic delay effect, penalizes larger latency monotonically, and avoids excessive punishment of high-latency candidates and excessive reward for near-zero latency.

\paragraph{Dual-Bound Coordination.}

Based on the candidate characterization results, HACO allocates redundancy with a dual-bound coordination mechanism.
Candidate ranking and redundancy termination face different uncertainty risks.
For ranking, overly pessimistic estimates may hide under-observed but potentially strong candidates.
For termination, overly optimistic estimates may stop the hedge construction too early and weaken system reliability.
HACO addresses these two risks with separate estimates, i.e., an optimistic utility for ranking candidates and a conservative reliability estimate for deciding when the hedge set is sufficient.

\emph{1. Optimistic Prioritization.}
For each candidate $a_i$, HACO computes an optimism-adjusted utility
\begin{equation}
U_i^{+}=
\frac{
\min(1,\mu_i+\lambda\sigma_i)\,\rho_i
}{
D_i
},
\label{eq:optimistic-utility}
\end{equation}
where $\mu_i$ and $\sigma_i$ are the mean and standard deviation of the candidate's posterior output-quality estimation (i.e., Eq.~\ref{eq:posterior-stats}) , $\lambda$ is the exploration coefficient, $\rho_i$ denotes the estimated execution success probability (i.e., Eq.~\ref{eq:physical-reliability}), and $D_i$ is the stable latency discount factor (i.e., Eq.~\ref{eq:delay-discount}). Candidates are ranked in descending order of $U_i^{+}$.
This utility favors candidates with high estimated capability of the selected LLM and high execution reliability, while applying a controlled latency discount. The uncertainty bonus $\lambda\sigma_i$ encourages exploration of under-observed but potentially strong candidates. 

\emph{2. Conservative Reliability Accumulation.}
Following the ranking, HACO incrementally constructs the hedge set using a pessimistic execution reliability estimate:
\begin{equation}
R_i^- = \min\left(0.98, \max(\epsilon_{\rm R}, \mu_i-\gamma\sigma_i)\cdot \rho_i\right).
\label{eq:conservative-reliability}
\end{equation}
Here, $\mu_i-\gamma\sigma_i$ gives a lower-confidence estimate of the candidate's output-quality capability, and $\rho_i$ discounts this estimate by the candidate's physical execution reliability.
The lower bound $\epsilon_{\rm R}$ stabilizes the capability term, while the upper cap $0.98$ avoids perfect single-candidate reliability.
This conservative estimate is used only for stopping, so that HACO does not terminate hedge construction based on overly optimistic reliability estimates.

Let $P_{\mathrm{fail}}$ denote the cumulative failure probability that all currently selected candidates fail. Starting from $P_{\mathrm{fail}}=1$, HACO updates:
\begin{equation}
P_{\mathrm{fail}}
\leftarrow
P_{\mathrm{fail}}(1-R_i^{-})
\label{eq:failure-update}
\end{equation}
whenever a new candidate is added. Redundancy allocation stops when
\begin{equation}
1-P_{\mathrm{fail}} \ge \tau,
\end{equation}
or when the candidate pool is exhausted.
A single shared estimate is insufficient for both purposes: an overly conservative utility estimate would suppress exploration, whereas an overly optimistic one would weaken the stopping guarantee.

In this way, Phase 2 decides both \emph{which} candidates should enter the hedge set and \emph{how many} to select under the conservative stopping rule.
The resulting ordering-and-stopping rule is analyzed in Appendix~\ref{app:haco_properties}, which establishes conservative reliability under the $U_i^+$-induced ordering.

\subsection{Phase 3: Hedged Execution}
\label{sec:phase3}
Given the hedge set selected, HACO launches all candidates in $\mathcal{S}$ in parallel and returns once the highest-utility successful candidate is determined.
Remaining traces are harvested asynchronously for future profile updates.
For each executed candidate $a_i \in \mathcal{S}$, let $s_i \in \{0,1\}$ denote whether the execution succeeds, and let $t_i$ denote its observed completion time.
HACO denotes the execution-successful candidates observed for winner selection as
\begin{equation}
\mathcal{S}^{+} = \{ a_i \in \mathcal{S} \mid s_i = 1 \}.
\label{eq:successful-set}
\end{equation}
If $\mathcal{S}^+=\emptyset$, HACO emits a role-failure payload to the controller for recovery and skips winner selection.
Otherwise, HACO returns the candidate with the highest optimistic prioritization utility:
\begin{equation}
a_{\mathrm{winner}}
=
\arg\max_{a_i \in \mathcal{S}^{+}} U_i^{+}.
\label{eq:winner-selection}
\end{equation}
The observed invocation latency is the time when the selected winner can be determined:
\begin{equation}
T_{\mathrm{obs}}
=
\max\!\left(
t(a_{\mathrm{winner}}),
\max_{a_j \in \mathcal{S}: U_j^+ > U_{a_{\mathrm{winner}}}^+}
t_j^{\mathrm{res}}
\right),
\label{eq:observed-latency}
\end{equation}
where $t_j^{\mathrm{res}}$ denotes the resolution time of a higher-utility candidate, and the inner maximum is omitted when the set is empty.
Thus, HACO ties blocking latency to the winner-decision time and uses the same \(U_i^+\) signal for final selection, trading post-execution reranking for lower evaluation overhead.

\subsection{Phase 4: Experience Harvesting}
\label{sec:experience_harvesting}

A key advantage of HACO is that redundancy is used not only for robust execution, but also for experience harvesting since all executed candidates in the hedge set could provide useful feedback for future allocation decisions.

Specifically, for each executed candidate $a_i \in \mathcal{S}_t$, HACO records its LLM-judge quality signal $q_i$, success status $s_i$, completion time $t_i$, token usage, internal failure status, and communication telemetry. Here, $\mathcal{S}_t$ denotes the hedge set at step $t$ (and we write $\mathcal{S}$ when the step index is omitted). These traces are aggregated into two complementary profiles:

\textit{Candidate profiles.} HACO updates uncertainty-aware capability estimate from LLM-judge quality signals. Here, $q_i$ is used to update the Beta profile in Eq.~\eqref{eq:capability-update} and changes the posterior mean and uncertainty $(\mu_i,\sigma_i)$ in Eq.~\eqref{eq:posterior-stats}. These updated estimates are reused in the next allocation step by the optimistic ranking score $U_i^+$ in Eq.~\eqref{eq:optimistic-utility} and the conservative reliability estimate $R_i^-$ in Eq.~\eqref{eq:conservative-reliability}. HACO also refreshes running statistics, e.g., call frequency, failure rate, latency, and token usage.

\textit{Network-link profiles.} HACO logs inter-agent transmissions and updates link-level statistics, including average bandwidth, bandwidth jitter, and packet-loss rate. These statistics are reused to estimate physical reliability $\rho_i$ in Eq.~\eqref{eq:physical-reliability} and effective latency $T_i$ in Eq.~\eqref{eq:effective-latency}. The resulting latency cost and delay discount in Eqs.~\eqref{eq:latency-cost}--\eqref{eq:delay-discount} then affect the next-round utility score and hedge-set construction.

This harvesting mechanism turns redundant execution into both immediate robustness and future allocation evidence. Each hedge set provides observations over agent capability and physical execution conditions, allowing HACO to refine agent selection and network-aware cost estimation over time. Algorithm~\ref{alg:haco} in Appendix~\ref{sec:algorithm} summarizes the full procedure of HACO.

\section{Experiments}

\subsection{Experimental Setting}

\noindent \textbf{Role types.}
HACO can be applied to any MAS that exposes role invocations and allows execution over multiple candidate agent instances. Here, we instantiate HACO on a multi-agent data-analysis system derived from \textsc{DATAWISE}. 
The underlying MAS is organized around four role types: planner, coder, debugger, and filter. 
These roles form a structured workflow in which the planner decomposes and updates task steps, the coder generates and executes code, the debugger repairs execution failures, and the filter cleans debugging outputs before execution resumes. 
This setting provides a representative long-horizon agent controller with state transitions and self-debugging behavior, while leaving HACO independent of the specific controller implementation. 
The full MAS-DATAWISE workflow is described in Appendix~\ref{sec: datawise_mas}.

\noindent \textbf{Execution environment.}
To evaluate runtime uncertainty, we build a heterogeneous execution environment that captures both network fluctuation and probabilistic candidate failures. 
Candidate agent instances are deployed across three execution zones, i.e., Global, Regional, and Local. 
Each zone contains LLM candidates instantiated through different model APIs. We configure candidate-level invocation failures and inter-zone network conditions, including latency, jitter, bandwidth, and loss rate, through a controllable interface that emulates realistic deployment conditions.
The motivation for using execution zones follows the cross-region inference patterns discussed in Appendix~\ref{sec:cross-region-gap}; concrete candidate-pool configurations are reported in Appendix~\ref{app:candidate_pool_details}, and the network simulator is detailed in Appendix~\ref{app:network_simulation}. 
We also validate HACO in a real-world Azure deployment with regional endpoints, measured endpoint latency, and candidate failures, as detailed in Appendix~\ref{app:realworld_setting}.

\noindent \textbf{Benchmarks.}
We evaluate HACO on three representative benchmarks. 
DSBench~\cite{jing2025dsbench} evaluates data science agents on end-to-end workflows involving data preprocessing, feature engineering, and predictive modeling.
InfiAgent-Bench~\cite{hu2024infiagent} focuses on multi-step data analysis over structured CSV files in a code execution environment.
MatplotBench~\cite{yang2024matplotagent} evaluates scientific visualization agents that generate plots from natural language specifications and structured data inputs.
These benchmarks cover data-analysis reasoning, executable workflows, and visualization-oriented code generation.

\noindent \textbf{Baselines.}
We compare HACO with representative baselines.
Following recent LLM-routing benchmark practice~\cite{huang2025routereval}, \textbf{Random} uniformly samples one candidate for execution, serving as an environment-agnostic single-route reference.
Inspired by recent feedback-based LLM routing methods~\cite{ong2024routellm,tsiourvas2025causal}, \textbf{BestOne} selects the best one candidate according to historical performance records.
\textbf{MoA-style Best-of-\(N\)}~\cite{wang2025mixture} executes all candidate agents in parallel and uses an evaluator language model to score the candidate outputs, returning the highest-scoring response.
Detailed descriptions and pseudocode are provided in Appendix~\ref{app:baseline}; hyperparameters are reported in Appendix~\ref{sec:algorithm}.

\subsection{Experimental Results}
\label{sec:experiments}

\paragraph{Reliability under Runtime Uncertainty.}
Fig.~\ref{fig:robustness} evaluates HACO under runtime degradation injected after task index 50 and under an Azure-backed bad-agent deployment.
Across network degradation, candidate degradation, and real bad-agent deployment, BestOne shows the largest reliability collapse, with all-failed-rate increases of $72.8$, $67.4$, and $90.4$ percentage points.
The reason is that BestOne repeatedly favors the historically strongest candidate or zone, which becomes brittle once that execution path is degraded.
Random is less concentrated, but it still lacks an explicit mechanism to identify and avoid degraded runtime conditions.
HACO remains stable because redundant executions provide online evidence for reallocating future invocations.
The selected-zone shifts in Fig.~\ref{fig:robustness}(d) show that HACO moves away from degraded regions, reducing Global selections by $80$ and $40$ percentage points under network and candidate degradation, respectively.
In the real deployment, HACO shifts away from the failing US-labeled deployment and increases the use of JP and SE, while BestOne shifts further toward the failing region.
Evaluator degradation has a minor effect, indicating that HACO does not strongly depend on the judge and can operate effectively with a weak LLM-as-judge evaluator.
Overall, HACO approaches MoA-style robustness without full-pool execution, supporting the value of experience-guided adaptive redundancy under runtime uncertainty.
\begin{figure}[t]
    \centering
    \includegraphics[width=1.0\textwidth]{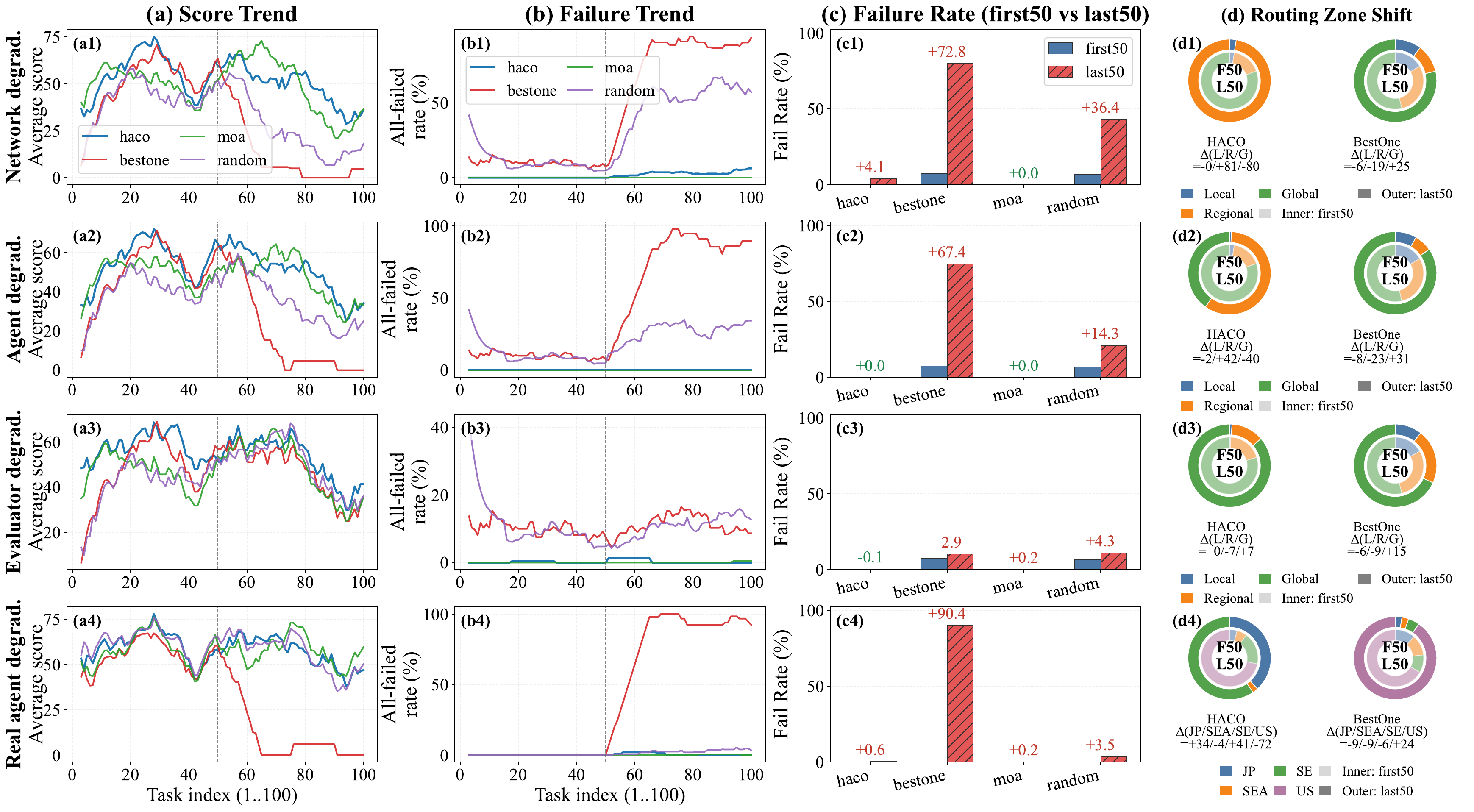}
    \caption{\textbf{Reliability evaluation under runtime degradation and real deployment failures.} Rows correspond to network degradation, candidate degradation, evaluator degradation, and a real-world Azure-backed deployment with injected candidate failures.}
    \label{fig:robustness}
\end{figure}

\paragraph{Overall Benchmark Performance.}
\label{sec:general_performance}

Fig.~\ref{fig:radar} compares HACO with single-route, fixed-zone, and full-redundancy baselines under the heterogeneous execution environment.
All axes are normalized so that larger values indicate better performance, including effectiveness, strict success, token efficiency, latency efficiency, and all-failed-step reliability.
Formal definitions of these radar metrics are provided in Appendix~\ref{app:radar_metric_defs}.
Across all benchmarks, HACO forms one of the largest and most balanced radar profiles, indicating consistent gains in task effectiveness, execution reliability, and practical efficiency.
Single-route baselines expose the limitation of committing to one candidate before execution: Random is inexpensive but unstable, while BestOne can exploit historical performance but remains sensitive to unreliable candidates or execution zones.
Fixed-zone baselines further show that no single execution zone dominates across all benchmarks and metrics.
MoA-style Best-of-N provides strong reliability through full-pool execution, but this comes with substantially higher token and latency cost.
In contrast, HACO uses an adaptive hedge subset, achieving reliability close to MoA while avoiding exhaustive execution.
Detailed numerical results and an external-reference estimate are reported in Appendix~\ref{app:main}.
Overall, these results support the main claim that adaptive redundancy provides a better effectiveness--efficiency--reliability trade-off than either single routing or full parallel execution.

\begin{figure}[t]
    \centering
    \includegraphics[width=0.95\textwidth]{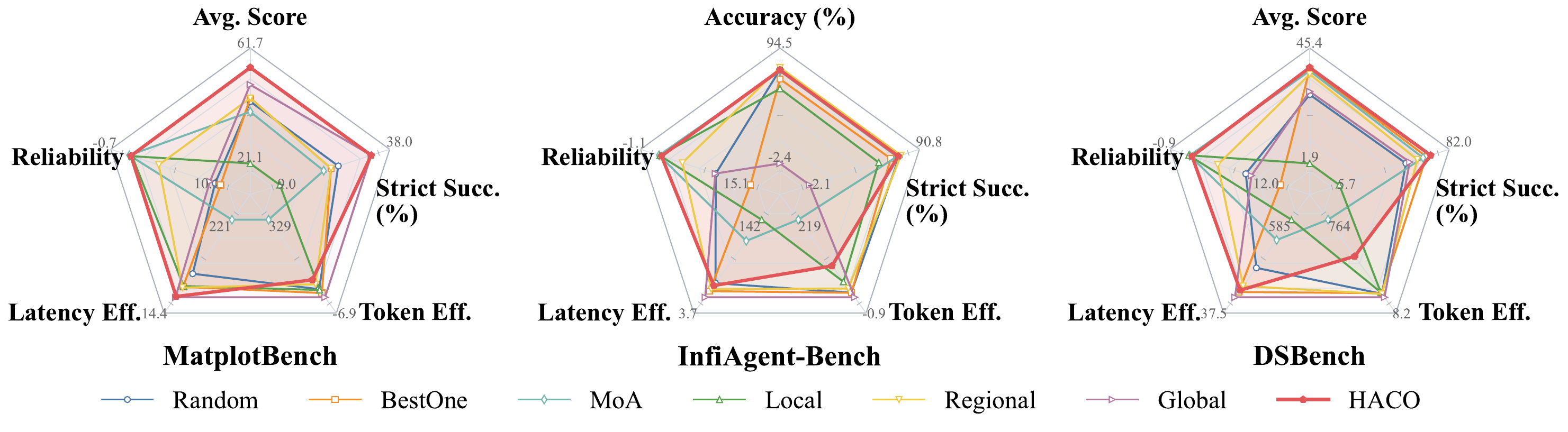}
    \caption{\textbf{Performance comparison under the heterogeneous execution environment.} Results are reported on MatplotBench, InfiAgent-Bench, and DSBench for Random, BestOne, MoA-style Best-of-N, fixed-zone Local, fixed-zone Regional, fixed-zone Global, and HACO.}
    \label{fig:radar}
\end{figure}

\paragraph{Behavioral Analysis.}
\label{sec:behavior}

Fig.~\ref{fig:behavior} examines HACO's adaptive redundancy and routing behavior.
Unlike Random, BestOne, and fixed-zone baselines, which execute one candidate per invocation, and unlike MoA-style baseline, which executes the full candidate pool, HACO selects a hedge set whose size varies adaptively to adjust redundancy according to task difficulty and runtime uncertainty, as shown in Fig.~\ref{fig:behavior} (a).
The per-task traces in Fig.~\ref{fig:behavior} (b) show role- and dataset-dependent allocation patterns, with harder workflows such as DSBench generally requiring larger hedge sets than simpler analysis tasks such as InfiAgent-Bench.
The selected-zone distribution in Fig.~\ref{fig:behavior} (c) shows that HACO uses multiple execution zones instead of committing to a single fixed zone.
These results support HACO's design goal of treating redundancy and execution environment as runtime decision variables.

\begin{figure}[t]
    \centering
    \includegraphics[width=1\textwidth]{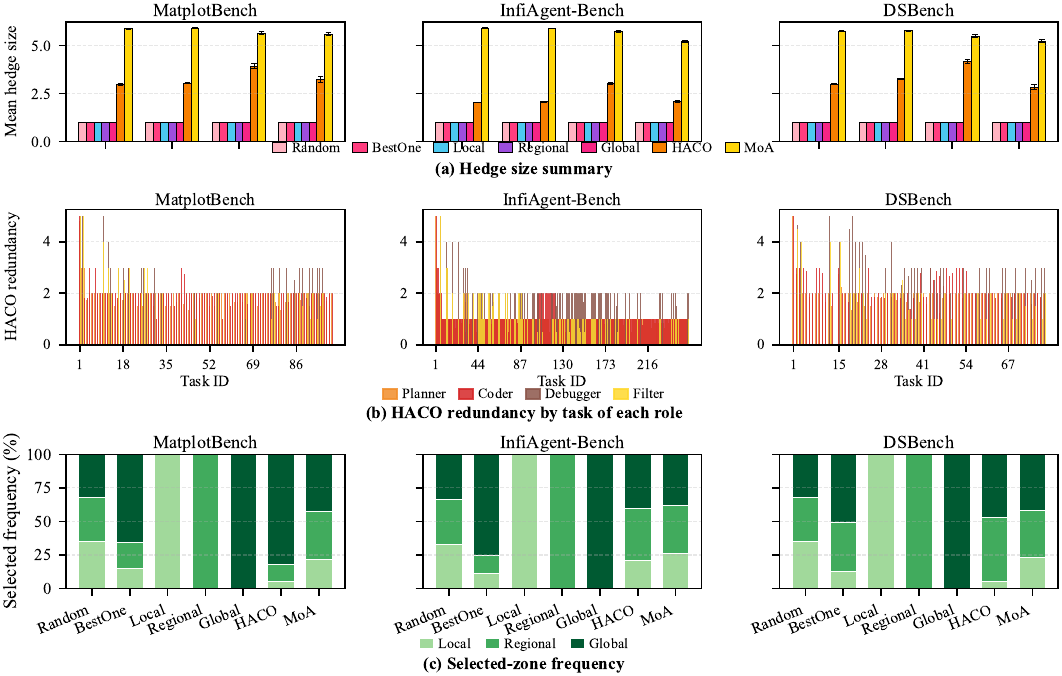}

    \caption{\textbf{HACO behavior.}
    (a) Mean hedge size of baselines and HACO, (b) per-task role redundancy, and (c) selected-zone frequency across three benchmarks.}
    \label{fig:behavior}
\end{figure}

\paragraph{Reliability–Cost Trade-off.}
\label{sec:tradeoff}

Fig.~\ref{fig:tradeoff} (a)--(c) show that the reliability target $\tau$ provides a controllable knob for balancing task quality, execution reliability, and token cost.
As $\tau$ increases from $0.70$ to $0.95$, the step failure rate decreases from $1.87\%$ to $0.17\%$, while the average score increases from $53.77$ to $58.86$.
This improvement comes at the cost of more redundant execution, as reflected by the increasing token mean in Fig.~\ref{fig:tradeoff} (c).
The score-token curve changes smoothly, indicating that HACO can move gradually between low-cost and high-reliability operating points.
Single-route methods operate in a low-cost but less reliable regime, while MoA improves reliability through full-pool execution.
HACO occupies an intermediate region, approaching high reliability without always paying the full cost of exhaustive parallelism.
These results support the claim that adaptive redundancy improves robustness and exposes a tunable performance-efficiency frontier.

\begin{figure}[t]
    \centering
    \includegraphics[width=1.0\textwidth]{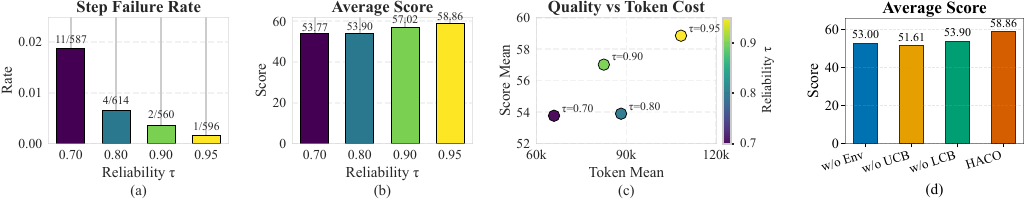}
    \caption{\textbf{Reliability-target sweep and component ablation in MatplotBench.} Panels (a)--(c) sweep the system reliability target $\tau \in \{0.70, 0.80, 0.90, 0.95\}$ and report step failure rate, average score, and the score-token relationship. Panel (d) compares HACO with ablated variants that remove environment-aware modeling ({\rm{Env}}), the uncertainty bonus ({\rm{UCB}}), or conservative stopping ({\rm{LCB}}).}
    \label{fig:tradeoff}
\end{figure}

\paragraph{Component Ablation}
\label{sec:ablation}

Fig.~\ref{fig:tradeoff} (d) evaluates the contribution of three key components in HACO on MatplotBench.
Removing any component decreases the average score, indicating that HACO's gain does not come from a single heuristic.
Without environment-aware modeling, the score drops from $58.86$ to $53.00$, showing that effective candidate selection requires communication reliability, latency, and output quality to be modeled jointly.
Removing the uncertainty bonus gives the lowest score, $51.61$, suggesting that optimistic exploration is important for discovering under-observed but useful candidates.
Replacing conservative stopping also reduces the score to $53.90$, which suggests that pessimistic reliability estimation helps avoid premature termination of the hedge set.
Overall, the ablation confirms that HACO benefits from the joint design of environment-aware routing, uncertainty-aware ranking, and conservative reliability accumulation.

\section{Conclusion}

We presented HACO, a hedged agent computing approach for LLM-based agent systems.
HACO views each role invocation as a runtime decision that jointly depends on role type, LLM choice, and execution environment, instead of treating it as a fixed call determined only by a role--agent instance pair.
It combines query-adaptive redundant execution with experience harvesting, so that redundant invocations improve current robustness while also providing feedback for future allocation.
Across agent benchmarks and stress settings, HACO improves execution reliability and output quality under heterogeneous runtime conditions, including network degradation, candidate degradation, evaluator degradation, and an Azure-backed deployment with injected candidate failures. HACO reduces the fragility of single-route execution and approaches the robustness of full parallel execution without always incurring its full token and latency cost. 
These results suggest that execution-aware redundancy allocation is a practical mechanism for building reliable LLM-based agent systems in dynamic real-world deployment environments.

\bibliographystyle{plainnat}
\bibliography{neurips_2026}

@misc{anthropic2025harness,
  title={Effective Harnesses for Long-Running Agents},
  author={{Anthropic}},
  year={2025},
  howpublished={Anthropic Engineering Blog},
  note={Published Nov. 26, 2025},
  url={https://www.anthropic.com/engineering/effective-harnesses-for-long-running-agents}
}

@misc{schmid2026agentharness,
  title={The Importance of Agent Harness in 2026},
  author={Schmid, Philipp},
  year={2026},
  howpublished={Blog post},
  url={https://www.philschmid.de/agent-harness-2026}
}

@article{zhang2025survey,
  title={A survey on the memory mechanism of large language model-based agents},
  author={Zhang, Zeyu and Dai, Quanyu and Bo, Xiaohe and Ma, Chen and Li, Rui and Chen, Xu and Zhu, Jieming and Dong, Zhenhua and Wen, Ji-Rong},
  journal={ACM Transactions on Information Systems},
  volume={43},
  number={6},
  pages={1--47},
  year={2025},
  publisher={ACM New York, NY}
}

@article{mei2025survey,
  title={A survey of context engineering for large language models},
  author={Mei, Lingrui and Yao, Jiayu and Ge, Yuyao and Wang, Yiwei and Bi, Baolong and Cai, Yujun and Liu, Jiazhi and Li, Mingyu and Li, Zhong-Zhi and Zhang, Duzhen and others},
  journal={arXiv preprint arXiv:2507.13334},
  year={2025}
}

@article{qu2025tool,
  title={Tool learning with large language models: A survey},
  author={Qu, Changle and Dai, Sunhao and Wei, Xiaochi and Cai, Hengyi and Wang, Shuaiqiang and Yin, Dawei and Xu, Jun and Wen, Ji-Rong},
  journal={Frontiers of Computer Science},
  volume={19},
  number={8},
  pages={198343},
  year={2025},
  publisher={Springer}
}

@article{he2025llm,
  title={{LLM}-based multi-agent systems for software engineering: Literature review, vision, and the road ahead},
  author={He, Junda and Treude, Christoph and Lo, David},
  journal={ACM Transactions on Software Engineering and Methodology},
  volume={34},
  number={5},
  pages={1--30},
  year={2025},
  publisher={ACM New York, NY}
}

@article{yan2025beyond,
  title={Beyond self-talk: A communication-centric survey of {LLM}-based multi-agent systems},
  author={Yan, Bingyu and Zhou, Zhibo and Zhang, Litian and Zhang, Lian and Zhou, Ziyi and Miao, Dezhuang and Li, Zhoujun and Li, Chaozhuo and Zhang, Xiaoming},
  journal={arXiv preprint arXiv:2502.14321},
  year={2025}
}

@article{raza2026trism,
  title={{Trism} for agentic {Trism}: A review of trust, risk, and security management in {LLM}-based agentic multi-agent systems},
  author={Raza, Shaina and Sapkota, Ranjan and Karkee, Manoj and Emmanouilidis, Christos},
  journal={AI Open},
  year={2026},
  publisher={Elsevier}
}

@article{qian2024chatdev,
  title={ChatDev: Communicative Agents for Software Development},
  author={Qian, Chen and Liu, Wei and Liu, Hongzhang and Chen, Nuo and Dang, Yufan and Li, Jiahao and Yang, Cheng and Chen, Weize and Su, Yusheng and Cong, Xin and Xu, Juyuan and Li, Dahai and Liu, Zhiyuan and Sun, Maosong},
  journal={ACL},
  year={2024},
  url={https://arxiv.org/abs/2307.07924}
}

@article{fourney2024magenticone,
  title={Magentic-One: A Generalist Multi-Agent System for Solving Complex Tasks},
  author={Fourney, Adam and Bansal, Gagan and Mozannar, Hussein and Tan, Cheng and Salinas, Eduardo and Zhu, Erkang and Niedtner, Friederike and Proebsting, Grace and Bassman, Griffin and Gerrits, Jack and Alber, Jacob and Chang, Peter and Loynd, Ricky and West, Robert and Dibia, Victor and Awadallah, Ahmed and Kamar, Ece and Hosn, Rafah and Amershi, Saleema},
  journal={arXiv preprint arXiv:2411.04468},
  year={2024},
  url={https://arxiv.org/abs/2411.04468}
}

@misc{caict_alibaba_2025_genai_architecture,
  author       = {{China Academy of Information and Communications Technology} and {Alibaba Cloud Computing Co., Ltd.}},
  title        = {Guiding Principles for Excellent Generative AI Architecture Design},
  year         = {2025},
  month        = sep,
  day          = {17},
  url          = {https://pdf.dfcfw.com/pdf/H3_AP202509181745854315_1.pdf},
  note         = {Accessed: 2026-04-07}
}

@article{feng2026agentswing,
  title={AgentSwing: Adaptive Parallel Context Management Routing for Long-Horizon Web Agents},
  author={Feng, Zhaopeng and Su, Liangcai and Zhang, Zhen and Wang, Xinyu and Zhang, Xiaotian and Wang, Xiaobin and Fang, Runnan and Zhang, Qi and Li, Baixuan and Cai, Shihao and others},
  journal={arXiv preprint arXiv:2603.27490},
  year={2026}
}

@article{chen2026mapreduce,
  title={A-MapReduce: Executing Wide Search via Agentic MapReduce},
  author={Chen, Mingju and Zhang, Guibin and Chang, Heng and Guo, Yuchen and Zhou, Shiji},
  journal={arXiv preprint arXiv:2602.01331},
  year={2026}
}

@article{yang2026understanding,
  title={Understanding Agent Scaling in LLM-Based Multi-Agent Systems via Diversity},
  author={Yang, Yingxuan and Qu, Chengrui and Wen, Muning and Shi, Laixi and Wen, Ying and Zhang, Weinan and Wierman, Adam and Gu, Shangding},
  journal={arXiv preprint arXiv:2602.03794},
  year={2026},
  url={https://arxiv.org/abs/2602.03794}
}

@misc{cnn2024redsea_cables,
  title        = {Red Sea Cables Have Been Damaged, Disrupting Internet Traffic},
  author       = {{CNN Business}},
  year         = {2024},
  howpublished = {\url{https://www.cnn.com/2024/03/04/business/red-sea-cables-cut-internet}},
  note         = {Accessed: 2026-05-02}
}

@misc{cnbc2025azure_redsea,
  title        = {Microsoft Says Azure Cloud Computing Service Disrupted by Fiber Cuts in the Red Sea},
  author       = {{CNBC}},
  year         = {2025},
  howpublished = {\url{https://www.cnbc.com/2025/09/06/microsoft-azure-cloud-computing-service-disrupted-red-sea-fiber-cuts.html}},
  note         = {Accessed: 2026-05-02}
}

@misc{azure2025west_europe_thermal,
  title        = {Post Incident Review: Thermal Event Impacting Multiple Services -- West Europe},
  author       = {{Microsoft Azure}},
  year         = {2025},
  howpublished = {\url{https://azure.status.microsoft/en-us/status/history/?force_isolation=true}},
  note         = {Tracking ID: 2LGD-9VG. Accessed: 2026-05-02}
}

@article{ferrari2004beta,
  title={Beta regression for modelling rates and proportions},
  author={Ferrari, Silvia and Cribari-Neto, Francisco},
  journal={Journal of applied statistics},
  volume={31},
  number={7},
  pages={799--815},
  year={2004},
  publisher={Taylor \& Francis}
}

@inproceedings{kull2017beta,
  title={Beta calibration: a well-founded and easily implemented improvement on logistic calibration for binary classifiers},
  author={Kull, Meelis and Silva Filho, Telmo and Flach, Peter},
  booktitle={Artificial intelligence and statistics},
  pages={623--631},
  year={2017},
  organization={PMLR}
}

@inproceedings{ma2024beta,
  title={Beta-LR: Interpretable logical reasoning based on beta distribution},
  author={Ma, Yizhuo and Qin, Ke and Liang, Shuang},
  booktitle={Findings of the Association for Computational Linguistics: NAACL 2024},
  pages={1945--1955},
  year={2024}
}

@inproceedings{reichl2010logarithmic,
  title={The logarithmic nature of QoE and the role of the Weber-Fechner law in QoE assessment},
  author={Reichl, Peter and Egger, Sebastian and Schatz, Raimund and D'Alconzo, Alessandro},
  booktitle={2010 IEEE international conference on communications},
  pages={1--5},
  year={2010},
  organization={IEEE}
}

@article{2025datawiseagent,
  title={DatawiseAgent: A Notebook-Centric LLM Agent Framework for Automated Data Science},
  author={You, Ziming and Zhang, Yumiao and Xu, Dexuan and Lou, Yiwei and Yan, Yandong and Wang, Wei and Zhang, Huaming and Huang, Yu},
  journal={arXiv preprint arXiv:2503.07044},
  year={2025}
}

@article{bi2025redundancy,
  title={Redundancy as a Structural Information Principle for Learning and Generalization},
  author={Bi, Yuda and Zhu, Ying and Calhoun, Vince D.},
  journal={arXiv preprint arXiv:2510.10938},
  year={2025},
  url={https://arxiv.org/abs/2510.10938}
}

@inproceedings{jing2025dsbench,
  title={DSBench: How Far Are Data Science Agents from Becoming Data Science Experts?},
  author={Jing, Liqiang and Huang, Zhehui and Wang, Xiaoyang and Yao, Wenlin and Yu, Wenhao and Ma, Kaixin and Zhang, Hongming and Du, Xinya and Yu, Dong},
  booktitle={International Conference on Learning Representations},
  year={2025}
}

@inproceedings{hu2024infiagent,
  title={InfiAgent-DABench: Evaluating Agents on Data Analysis Tasks},
  author={Hu, Xueyu and Zhao, Ziyu and Wei, Shuang and others},
  booktitle={International Conference on Machine Learning},
  year={2024}
}

@inproceedings{yang2024matplotagent,
  title={MatPlotAgent: Method and Evaluation for LLM-Based Agentic Scientific Data Visualization},
  author={Yang, Zhiyu and Zhou, Zihan and Wang, Shuo and others},
  booktitle={Findings of ACL},
  year={2024}
}

@article{ong2024routellm,
  title={Routellm: Learning to route llms with preference data},
  author={Ong, Isaac and Almahairi, Amjad and Wu, Vincent and Chiang, Wei-Lin and Wu, Tianhao and Gonzalez, Joseph E and Kadous, M Waleed and Stoica, Ion},
  journal={arXiv preprint arXiv:2406.18665},
  year={2024}
}

@article{tsiourvas2025causal,
  title={Causal {LLMs} routing: End-to-end regret minimization from observational data},
  author={Tsiourvas, Asterios and Sun, Wei and Perakis, Georgia},
  journal={arXiv preprint arXiv:2505.16037},
  year={2025}
}

@inproceedings{wang2025mixture,
  title={Mixture-of-Agents Enhances Large Language Model Capabilities},
  author={Wang, Junlin and Wang, Jue and Athiwaratkun, Ben and Zhang, Ce and Zou, James Y.},
  booktitle={International Conference on Learning Representations},
  year={2025}
}

@article{huang2025routereval,
  title={{RouterEval}: A comprehensive benchmark for routing {LLMs} to explore model-level scaling up in {LLMs}},
  author={Huang, Zhongzhan and Ling, Guoming and Lin, Yupei and Chen, Yandong and Zhong, Shanshan and Wu, Hefeng and Lin, Liang},
  journal={arXiv preprint arXiv:2503.10657},
  year={2025}
}

@article{hong2023metagpt,
  title={MetaGPT: Meta Programming for A Multi-Agent Collaborative Framework},
  author={Hong, Sirui and Zhuge, Mingchen and Chen, Jiaqi and Zheng, Xiawu and Cheng, Yuheng and Zhang, Ceyao and Wang, Jinlin and Wang, Zili and Yau, Steven Ka Shing and Lin, Zijuan and Zhou, Liyang and Ran, Chenyu and Xiao, Lingfeng and Wu, Chenglin and Schmidhuber, J{\"u}rgen},
  journal={arXiv preprint arXiv:2308.00352},
  year={2023},
  url={https://arxiv.org/abs/2308.00352}
}

@article{carson2003nist,
  author  = {Mark Carson and Darrin Santay},
  title   = {NIST Net: A Linux-Based Network Emulation Tool},
  journal = {Computer Communication Review},
  volume  = {33},
  number  = {3},
  pages   = {111--126},
  year    = {2003}
}

@manual{hemminger2005netem,
  author = {Stephen Hemminger},
  title  = {tc-netem(8) Linux Manual Page},
  year   = {2005},
  note   = {Linux Network Emulator (NetEm)}
}

@article{gouveia2020kollaps,
  author  = {Paulo Gouveia and Jo{\~a}o Neves and Carlos Segarra and Luca Liechti and
             Shady Issa and Valerio Schiavoni and Miguel Matos},
  title   = {Kollaps: Decentralized and Dynamic Topology Emulation},
  journal = {arXiv preprint arXiv:2004.02253},
  year    = {2020}
}

@misc{azure_foundry_deployment_types,
  title        = {Understanding Deployment Types in Microsoft Foundry Models},
  author       = {{Microsoft}},
  year         = {2026},
  howpublished = {\url{https://learn.microsoft.com/en-us/azure/foundry/foundry-models/concepts/deployment-types}},
  note         = {Accessed: 2026-03-13}
}

@misc{vertex_deployments_endpoints,
  title        = {Deployments and Endpoints},
  author       = {{Google Cloud}},
  year         = {2026},
  howpublished = {\url{https://docs.cloud.google.com/vertex-ai/generative-ai/docs/learn/locations}},
  note         = {Accessed: 2026-03-13}
}

@misc{vertex_standard_paygo,
  title        = {Standard PayGo},
  author       = {{Google Cloud}},
  year         = {2026},
  howpublished = {\url{https://docs.cloud.google.com/vertex-ai/generative-ai/docs/standard-paygo}},
  note         = {Accessed: 2026-03-13}
}

@misc{bedrock_cross_region,
  title        = {Increase Throughput with Cross-Region Inference},
  author       = {{Amazon Web Services}},
  year         = {2026},
  howpublished = {\url{https://docs.aws.amazon.com/bedrock/latest/userguide/cross-region-inference.html}},
  note         = {Accessed: 2026-03-13}
}

@misc{bedrock_geo_cross_region,
  title        = {Geographic Cross-Region Inference},
  author       = {{Amazon Web Services}},
  year         = {2026},
  howpublished = {\url{https://docs.aws.amazon.com/bedrock/latest/userguide/geographic-cross-region-inference.html}},
  note         = {Accessed: 2026-03-13}
}

@misc{bedrock_regions_compatibility,
  title        = {Regional Availability - Amazon Bedrock},
  author       = {{Amazon Web Services}},
  year         = {2026},
  howpublished = {\url{https://docs.aws.amazon.com/bedrock/latest/userguide/models-region-compatibility.html}},
  note         = {Accessed: 2026-03-13}
}

@misc{openai_data_controls,
  title        = {Data Controls in the OpenAI Platform},
  author       = {{OpenAI}},
  year         = {2026},
  howpublished = {\url{https://developers.openai.com/api/docs/guides/your-data/}},
  note         = {Accessed: 2026-03-13}
}

@misc{cloudflare_workers_ai,
  title        = {Cloudflare Workers AI Overview},
  author       = {{Cloudflare}},
  year         = {2026},
  howpublished = {\url{https://developers.cloudflare.com/workers-ai/}},
  note         = {Accessed: 2026-03-13}
}

@misc{cloudflare_network,
  title        = {Cloudflare Global Network | Data Center Locations},
  author       = {{Cloudflare}},
  year         = {2026},
  howpublished = {\url{https://www.cloudflare.com/network/}},
  note         = {Accessed: 2026-03-13}
}

@misc{fireworks_regions,
  title        = {Regions - Fireworks AI Docs},
  author       = {{Fireworks AI}},
  year         = {2026},
  howpublished = {\url{https://docs.fireworks.ai/deployments/regions}},
  note         = {Accessed: 2026-03-13}
}

@misc{openrouter_quickstart,
  title        = {OpenRouter Quickstart Guide},
  author       = {{OpenRouter}},
  year         = {2026},
  howpublished = {\url{https://openrouter.ai/docs/quickstart}},
  note         = {Accessed: 2026-03-13}
}

@misc{portkey_fallbacks,
  title        = {Fallbacks - Portkey Docs},
  author       = {{Portkey}},
  year         = {2026},
  howpublished = {\url{https://portkey.ai/docs/product/ai-gateway/fallbacks}},
  note         = {Accessed: 2026-03-13}
}

@misc{litellm_auto_routing,
  title        = {Auto Routing - LiteLLM},
  author       = {{LiteLLM}},
  year         = {2026},
  howpublished = {\url{https://docs.litellm.ai/docs/proxy/auto_routing}},
  note         = {Accessed: 2026-03-13}
}

@misc{openrouter_provider_routing,
  title        = {Intelligent Multi-Provider Request Routing},
  author       = {{OpenRouter}},
  year         = {2026},
  howpublished = {\url{https://openrouter.ai/docs/guides/routing/provider-selection}},
  note         = {Accessed: 2026-03-13}
}

@article{ding2024hybrid,
  title={Hybrid {LLM}: Cost-efficient and quality-aware query routing},
  author={Ding, Dujian and Mallick, Ankur and Wang, Chi and Sim, Robert and Mukherjee, Subhabrata and Ruhle, Victor and Lakshmanan, Laks VS and Awadallah, Ahmed Hassan},
  journal={arXiv preprint arXiv:2404.14618},
  year={2024}
}

@article{yue2023large,
  title={Large language model cascades with mixture of thoughts representations for cost-efficient reasoning},
  author={Yue, Murong and Zhao, Jie and Zhang, Min and Du, Liang and Yao, Ziyu},
  journal={arXiv preprint arXiv:2310.03094},
  year={2023}
}

@inproceedings{kwon2023vllm,
  title={Efficient Memory Management for Large Language Model Serving with {PagedAttention}},
  author={Kwon, Woosuk and Li, Zhuohan and Zhuang, Siyuan and Sheng, Ying and Zheng, Lianmin and Yu, Cody Hao and Gonzalez, Joseph and Zhang, Hao and Stoica, Ion},
  booktitle={Proceedings of the 29th symposium on operating systems principles},
  pages={611--626},
  year={2023}
}

@inproceedings{jiang2023llmblender,
  title={{LLM}-Blender: Ensembling Large Language Models with Pairwise Ranking and Generative Fusion},
  author={Jiang, Dongfu and Ren, Xiang and Lin, Bill Yuchen},
  booktitle={Proceedings of the 61st Annual Meeting of the Association for Computational Linguistics (Volume 1: Long Papers)},
  pages={14165--14178},
  year={2023}
}

@article{wang2023selfconsistency,
  title={Self-Consistency Improves Chain of Thought Reasoning in Language Models},
  author={Wang, Xuezhi and Wei, Jason and Schuurmans, Dale and Le, Quoc and Chi, Ed and Narang, Sharan and Chowdhery, Aakanksha and Zhou, Denny},
  journal={arXiv preprint arXiv:2203.11171},
  year={2022}
}

@article{gupta2024language,
  title={Language model cascades: Token-level uncertainty and beyond},
  author={Gupta, Neha and Narasimhan, Harikrishna and Jitkrittum, Wittawat and Rawat, Ankit Singh and Menon, Aditya Krishna and Kumar, Sanjiv},
  journal={arXiv preprint arXiv:2404.10136},
  year={2024}
}

@article{zhang2024treacle,
  title={Efficient Contextual {LLM} Cascades through Budget-Constrained Policy Learning},
  author={Zhang, Xuechen and Huang, Zijian and Taga, Ege Onur and Joe-Wong, Carlee and Oymak, Samet and Chen, Jiasi},
  journal={Advances in Neural Information Processing Systems},
  volume={37},
  pages={91691--91722},
  year={2024}
}

@inproceedings{zhong2024distserve,
  title={$\{$DistServe$\}$: Disaggregating prefill and decoding for goodput-optimized large language model serving},
  author={Zhong, Yinmin and Liu, Shengyu and Chen, Junda and Hu, Jianbo and Zhu, Yibo and Liu, Xuanzhe and Jin, Xin and Zhang, Hao},
  booktitle={18th USENIX Symposium on Operating Systems Design and Implementation (OSDI 24)},
  pages={193--210},
  year={2024}
}

@article{sheng2024slora,
  title={{SLoRA}: Scalable Serving of Thousands of {LoRA} Adapters},
  author={Sheng, Ying and Cao, Shiyi and Li, Dacheng and Hooper, Coleman and Lee, Nicholas and Yang, Shuo and Chou, Christopher and Zhu, Banghua and Zheng, Lianmin and Keutzer, Kurt and others},
  journal={Proceedings of Machine Learning and Systems},
  volume={6},
  pages={296--311},
  year={2024}
}

@inproceedings{wu2024autogen,
  title={Autogen: Enabling next-gen LLM applications via multi-agent conversations},
  author={Wu, Qingyun and Bansal, Gagan and Zhang, Jieyu and Wu, Yiran and Li, Beibin and Zhu, Erkang and Jiang, Li and Zhang, Xiaoyun and Zhang, Shaokun and Liu, Jiale and others},
  booktitle={First conference on language modeling},
  year={2024}
}

\appendix
\section{Cross-Region Inference in Practice}
\label{sec:cross-region-gap}
This section documents publicly available evidence that cross-region inference is already a
practical deployment pattern in modern LLM serving systems. Modern platforms expose several
distinct abstractions that decouple API ingress from the physical location of inference execution.
Although their implementations differ, these systems consistently allow the execution region to vary
with deployment configuration, routing policy, capacity conditions, or compliance constraints.

From a systems perspective, existing platforms can be grouped into three representative categories.

\paragraph{(1) Cloud-region routing platforms}

Some LLM serving systems expose global endpoints that dynamically route requests across multiple
cloud regions. In these systems, the developer interacts with a stable API endpoint while the
provider determines the actual inference location according to internal routing policies such as
capacity availability or load balancing.

Azure Foundry provides three deployment types: \emph{Global}, \emph{Data Zone}, and
\emph{Regional}. Under Global deployment, requests may be processed in any Azure region where the
model is available. Data Zone deployments restrict processing within a specified geography such as
the US or EU, while Regional deployments keep execution within a specific region
\citep{azure_foundry_deployment_types}.

Vertex AI exposes a similar abstraction through \emph{regional endpoints} and a \emph{global
endpoint}. The global endpoint draws from a larger multi-region capacity pool and therefore offers
higher availability and burst tolerance than a single region. However, the platform does not expose
the actual processing region of a request to the developer
\citep{vertex_deployments_endpoints, vertex_standard_paygo}.

Amazon Bedrock provides cross-region execution through \emph{inference profiles}. These profiles
support both \emph{geographic cross-region inference}, which restricts execution within a specific
geography, and \emph{global cross-region inference}, which allows requests to be routed across
commercial AWS regions worldwide
\citep{bedrock_cross_region, bedrock_geo_cross_region, bedrock_regions_compatibility}. Bedrock also
documents that the final inference region can be observed through service telemetry such as
CloudTrail.

In all of these systems, region selection is determined by the infrastructure layer.

\paragraph{(2) Region pinning and data residency}

A second design treats region as an explicit configuration parameter. In this setting, developers choose the processing region during deployment, and inference is performed within that region.

OpenAI provides regional processing and data residency features at the project level. Developers
select a project region and, where supported, access the corresponding regional endpoint
\citep{openai_data_controls}. While this model does not expose dynamic cross-region routing in the
same way as Azure, Vertex AI, or Bedrock, it confirms that physical processing location is treated
as an operational concern in production LLM APIs.

\paragraph{(3) Distributed inference infrastructure}

A third design emerges in systems that provide inference over geographically distributed GPU
infrastructure.

Cloudflare Workers AI runs inference on GPUs deployed across Cloudflare's global edge network,
allowing models to execute close to user locations
\citep{cloudflare_workers_ai, cloudflare_network}. Fireworks provides multi-region deployments with
scopes such as \texttt{GLOBAL}, \texttt{US}, \texttt{EUROPE}, and \texttt{APAC}. Deployments may run
in any region within the selected multi-region scope
\citep{fireworks_regions}. In both cases, physical location is treated as part of the serving
substrate, not as a fixed endpoint.

\begin{table*}[t]
\caption{Publicly documented cross-region inference mechanisms in representative LLM serving
platforms. Existing systems already treat execution location as an operational variable at the
infrastructure layer.}
\centering
\footnotesize
\setlength{\tabcolsep}{4pt}
\renewcommand{\arraystretch}{1.08}
\begin{tabularx}{\textwidth}{
>{\raggedright\arraybackslash}p{1.7cm}
>{\raggedright\arraybackslash}X
>{\raggedright\arraybackslash}X
>{\raggedright\arraybackslash}X}
\toprule
\textbf{Platform} &
\textbf{Public cross-region support} &
\textbf{Developer control / visibility} &
\textbf{Primary platform objective} \\
\midrule

Azure Foundry
&
Global, Data Zone, and Regional deployments. Global and Data Zone deployments may route requests
across eligible datacenters.
&
Developers choose the deployment type. Global deployments may process prompts and responses in any
eligible Azure region, while Regional deployments keep processing in the deployment region.
&
Increase quota, throughput, model availability, and geography-level compliance. \\
\hline

Vertex AI
&
Regional endpoints and a global endpoint. The global endpoint routes traffic over a larger
multi-region capacity pool.
&
Developers choose regional versus global endpoints, but the actual processing region of the global
endpoint is not exposed to the user.
&
Improve availability, reduce 429 errors, and absorb bursty demand through capacity-aware routing. \\
\hline

Amazon Bedrock
&
Geographic and Global cross-region inference profiles route requests across multiple AWS Regions.
&
Developers choose an inference profile. The actual inference region can be observed through
CloudTrail.
&
Increase throughput, absorb traffic bursts, and satisfy geography-level compliance constraints. \\
\hline

OpenAI API
&
Regional processing and data residency at the project level.
&
Developers explicitly choose a project region and, where supported, use the corresponding regional
endpoint.
&
Support compliance, regional processing, and in some cases lower latency. \\
\hline

Cloudflare Workers AI
&
Inference runs on Cloudflare's global network with distributed GPU deployment.
&
Developers use a global network abstraction, while the concrete inference region is left unspecified.
&
Serve requests closer to users and improve responsiveness and resilience. \\
\hline

Fireworks AI
&
Default multi-region deployments over \texttt{GLOBAL}, \texttt{US}, \texttt{EUROPE}, and
\texttt{APAC}. Deployments may run in any region within the selected scope.
&
Developers can use a multi-region deployment or pin the deployment to a single region.
&
Elastic scaling, availability, regional compliance, and deployment near users. \\
\bottomrule
\end{tabularx}
\label{tab:cross_region_platforms}
\end{table*}

Across these designs, the physical execution region of an LLM request is therefore not fixed
\emph{a priori}. Instead, it may vary with deployment configuration, routing policy, capacity
conditions, or compliance constraints. This observation directly supports our formulation in which
the execution environment of a role invocation represents a runtime context with potentially
different latency, execution reliability, and downstream connectivity characteristics.

Table~\ref{tab:cross_region_platforms} shows that cross-region inference is already supported by
major model serving platforms. As we can see, current cloud inference platforms already consider region, but they do so mainly for infrastructure-level goals such as throughput, burst absorption, availability, and compliance \citep{azure_foundry_deployment_types, vertex_deployments_endpoints, vertex_standard_paygo, bedrock_cross_region, bedrock_geo_cross_region, openai_data_controls, cloudflare_workers_ai, fireworks_regions}. By contrast, third-party gateways and model access layers usually expose LLM choice or provider choice as the main control surface, emphasizing fallback, load balancing, and provider-level routing, with less direct attention to the execution region itself \citep{openrouter_quickstart, openrouter_provider_routing, portkey_fallbacks, litellm_auto_routing}. What remains weakly addressed is \emph{request-aware, region-aware execution} for agent workloads. Task-level performance depends not only on LLM choice, but also on where inference runs. HACO introduces an application-level coordination layer for agent execution. It operates above existing provider routing mechanisms and makes task-level decisions about whether to launch redundant executions, which candidates to include, and how to balance latency, system reliability, and output quality. This design incorporates query characteristics and execution region conditions into the coordination process, enabling agent systems to utilize regional diversity during inference.

\section{Comparison with Related Runtime and Inference-Time Paradigms}
\label{appcom}
\begin{table}[h]
\centering
\footnotesize
\caption{HACO compared with related runtime and inference-time paradigms.}
\label{tab:taxonomy}
\begin{tabularx}{\linewidth}{p{2.0cm} p{2.4cm} p{2.4cm} X}
\toprule
\textbf{Paradigm} & \textbf{Control point} & \textbf{Execution pattern} & \textbf{Relation to HACO} \\
\midrule
LLM routing~\citep{ding2024hybrid,ong2024routellm} &
Single-executor selection &
One model or endpoint per request &
Routing chooses one executor for a request, usually to balance quality and cost. HACO operates at the role-invocation allocation point and selects an adaptive hedge set when a single execution is not sufficient for the target reliability level. \\
\midrule
Fallback or cascade~\citep{yue2023large,gupta2024language,zhang2024treacle} &
Post-hoc escalation or recovery &
Sequential backup calls &
Fallback and cascade methods invoke a stronger or backup executor after failure, low confidence, or insufficient consistency. HACO allocates redundancy before execution using estimated capability, failure, latency, and link conditions. \\
\midrule
Batch inference~\citep{kwon2023vllm,zhong2024distserve,sheng2024slora} &
Serving-batch construction &
Many requests processed together &
Batch inference improves serving throughput or hardware utilization across requests. HACO controls reliability for each role invocation before the launched calls are served. \\
\midrule
Multi-output aggregation~\citep{jiang2023llmblender,wang2025mixture,wang2023selfconsistency} &
Output selection or synthesis &
Multiple samples, agents, or models &
Best-of-$N$, self-consistency, MoA, and ensemble methods mainly improve final answer quality through selection, voting, ranking, or synthesis. HACO decides which executions to launch under runtime uncertainty; aggregation can be applied after the hedge set produces outputs as a potential future research direction. \\
\midrule
\textbf{HACO} &
\textbf{Role-invocation allocation} &
\textbf{Adaptive parallel execution} &
\textbf{HACO selects a reliability-constrained hedge set over candidate agent instances and stops allocation when the target reliability level is reached.} \\
\bottomrule
\end{tabularx}
\end{table}

Table~\ref{tab:taxonomy} clarifies the control point addressed by HACO. Existing paradigms typically operate at single-route selection, post-failure recovery, serving-batch construction, or output aggregation. HACO instead addresses the preceding allocation problem for each role invocation. Specifically, before execution, it decides which candidate agent instances should be activated and when the selected hedge set has reached the required reliability level. This formulation makes output aggregation a possible downstream policy.

\paragraph{Role of Redundancy in HACO.}
Figure~\ref{fig:novelty} further clarifies the role of redundancy in HACO.
In contrast to methods that use parallelism mainly for post-hoc output selection, synthesis, or serving efficiency, HACO uses redundancy as a runtime control mechanism at the role-invocation level.
For the current invocation, hedged execution improves reliability by tolerating candidate failures, high latency, and unstable service or network conditions.
For future invocations, the executed hedge set provides additional traces for experience harvesting, including quality, success or failure, latency, and network statistics.
These observations update candidate and link profiles, allowing HACO to improve later routing and hedge-set allocation.
Thus, redundancy in HACO serves both as an immediate protection mechanism and as a data-collection mechanism for adaptive future decisions.

\begin{figure}[h]
    \centering
    \includegraphics[width=1.0\textwidth]{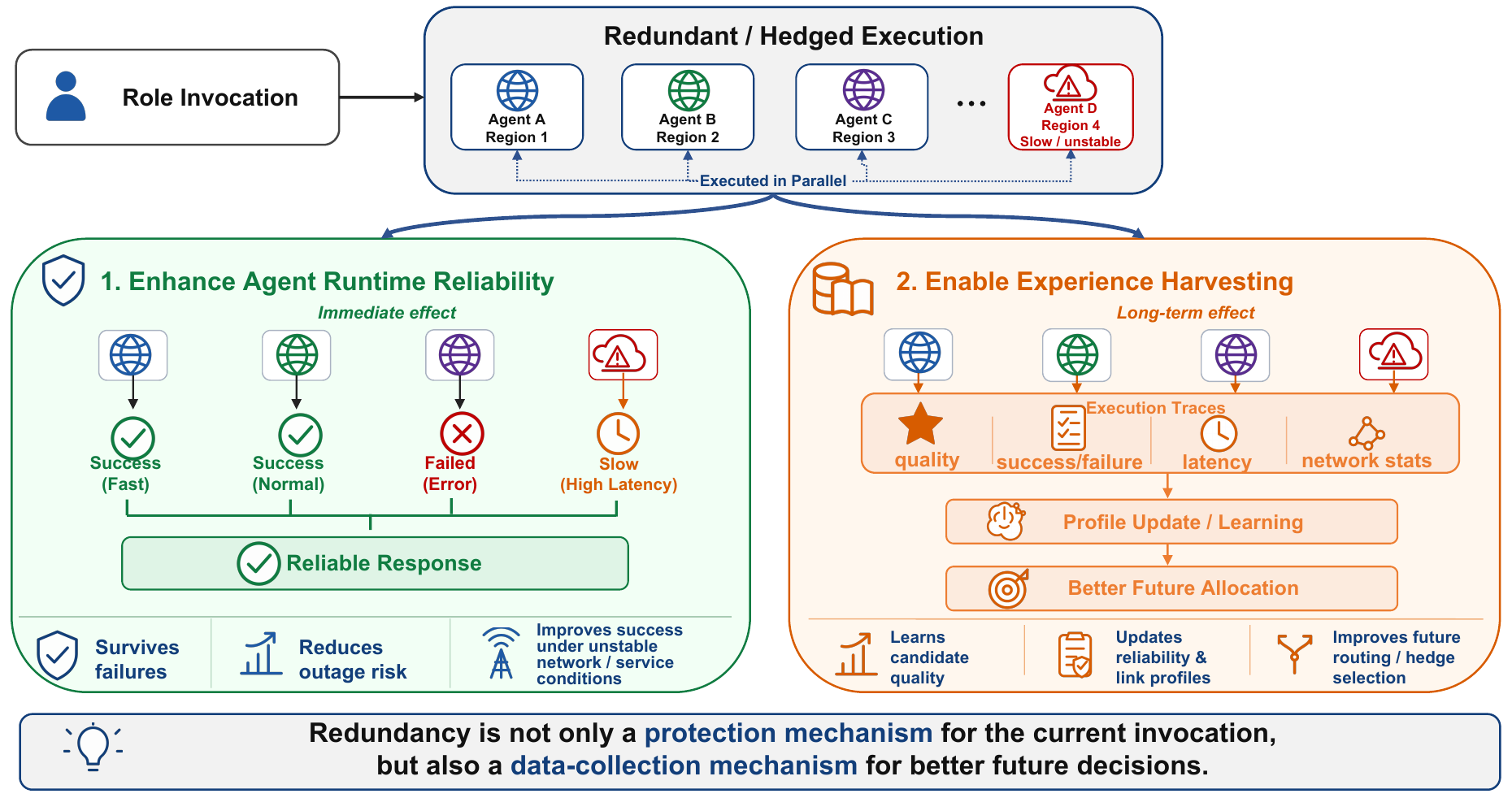}
    \caption{
    \textbf{Role of redundancy in HACO.}
    Redundant or hedged execution improves reliability for the current role invocation while also collecting execution traces for future allocation.
    HACO records quality, success/failure, latency, and network statistics from executed candidates, updates candidate and link profiles, and uses these profiles to improve later routing and hedge-set selection.
    }
    \label{fig:novelty}
\end{figure}
\section{Experimental Setup Details}

\subsection{MAS-DATAWISE settings}
\label{sec: datawise_mas}

Our multi-agent system is implemented as an extension of \textsc{DATAWISE}. 
In particular, we build on the original \textsc{DATAWISE}~\cite{2025datawiseagent} framework and adapt it from a single-system workflow into a collaborative multi-agent setting, while preserving its controller logic and execution structure. 
We use this system as the primary testbed in our experiments because it provides a representative controller with explicit state transitions, iterative code execution, and built-in self-debugging behaviors. 
At the same time, HACO is not tied to this specific controller design. 
Its routing and redundancy allocation mechanisms are agnostic to the internal orchestration policy of the underlying MAS, and can therefore be applied not only to complex finite-state-machine (FSM) controllers such as \textsc{DATAWISE}, but also to simpler workflow-style controllers such as \textsc{MetaGPT}~\cite{hong2023metagpt}, as well as other controller paradigms.

\begin{figure}[ht]
    \centering
    \includegraphics[width=1.0\textwidth]{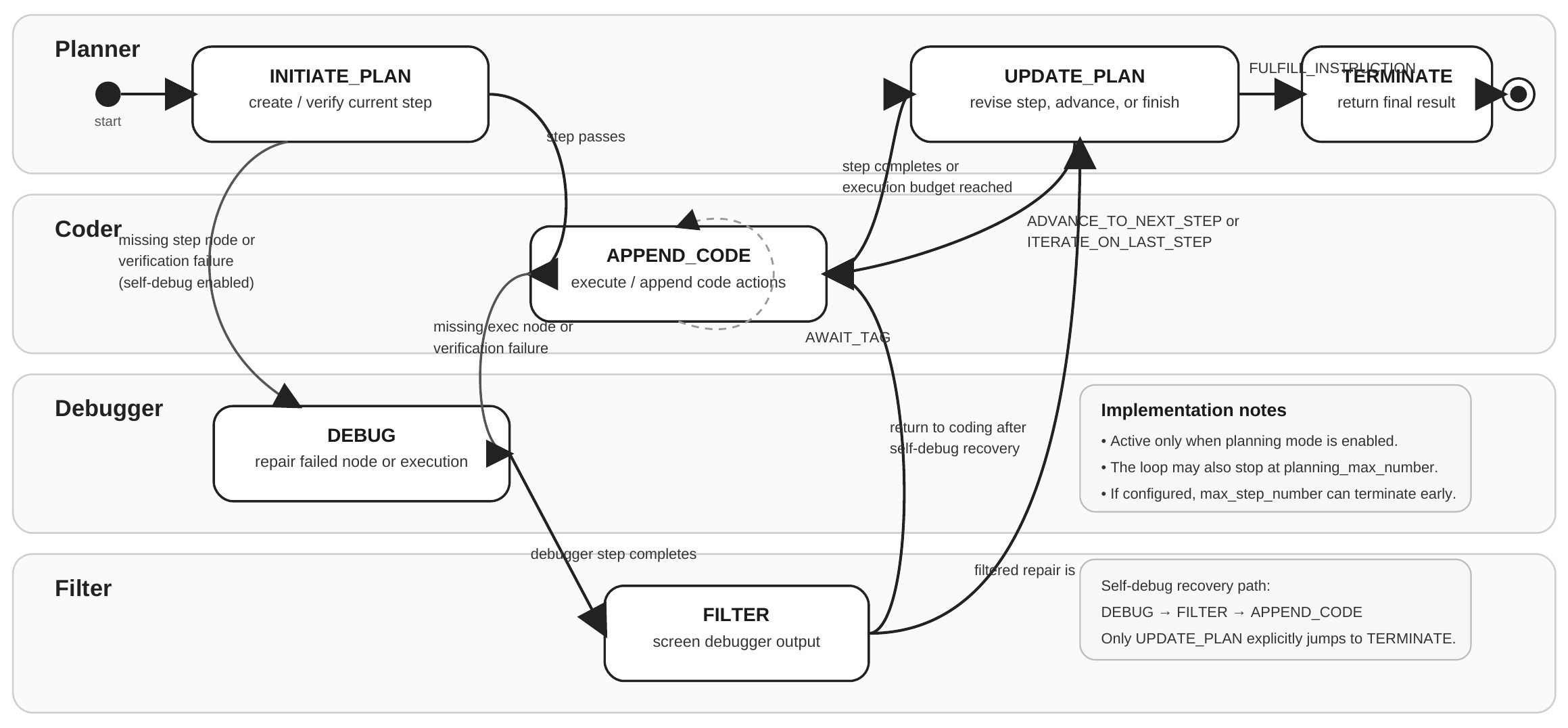}
    \caption{\textbf{Role-oriented view of the DATAWISE concurrent FSM workflow.} The underlying controller is a six-state finite-state machine, while the execution dynamics can be organized into four role types: planner, coder, debugger, and filter. The figure highlights role collaboration and the self-debug recovery path.}
    \label{fig:datawise_role_workflow}
\end{figure}

Fig.~\ref{fig:datawise_role_workflow} illustrates the DATAWISE concurrent FSM workflow from a role-oriented perspective. The underlying controller is a six-state finite-state machine with states INITIATE\_PLAN, APPEND\_CODE, DEBUG, FILTER, UPDATE\_PLAN, and TERMINATE. Functionally, however, the workflow is organized around four role types. The planner initializes and updates task steps, the coder performs iterative code generation and execution, the debugger repairs failures arising during execution or verification, and the filter cleans debugging outputs before execution resumes. The coding stage is controlled by explicit action signals: AWAIT\_TAG prolongs the current execution stage, whereas completion signals return control to the planner. At the workflow level, the planner determines whether to iterate on the current step, advance to the next step, or terminate the process through the signals ITERATE\_ON\_LAST\_STEP, ADVANCE\_TO\_NEXT\_STEP, and FULFILL\_INSTRUCTION. Together, these mechanisms form a structured self-debugging workflow in which recovery follows the path DEBUG $\rightarrow$ FILTER $\rightarrow$ APPEND\_CODE.

\subsection{Simulating Environment}
Fig.~\ref{fig:setting1} provides an overview of the simulated experimental setup. It combines the
zone-aware network topology over Local, Regional, and Global execution zones with the
role-specific candidate pool used by the MAS. The network side specifies latency, jitter,
bandwidth, and loss for each logical inter-zone link, while the candidate side shows the
LLM backbone, replica count, failure rates, and temperature settings for each role. The following
two sections detail these two components: the topology-based network simulator and the
candidate-pool configuration.
\begin{figure}[h]
    \centering
    \includegraphics[width=1.0\textwidth]{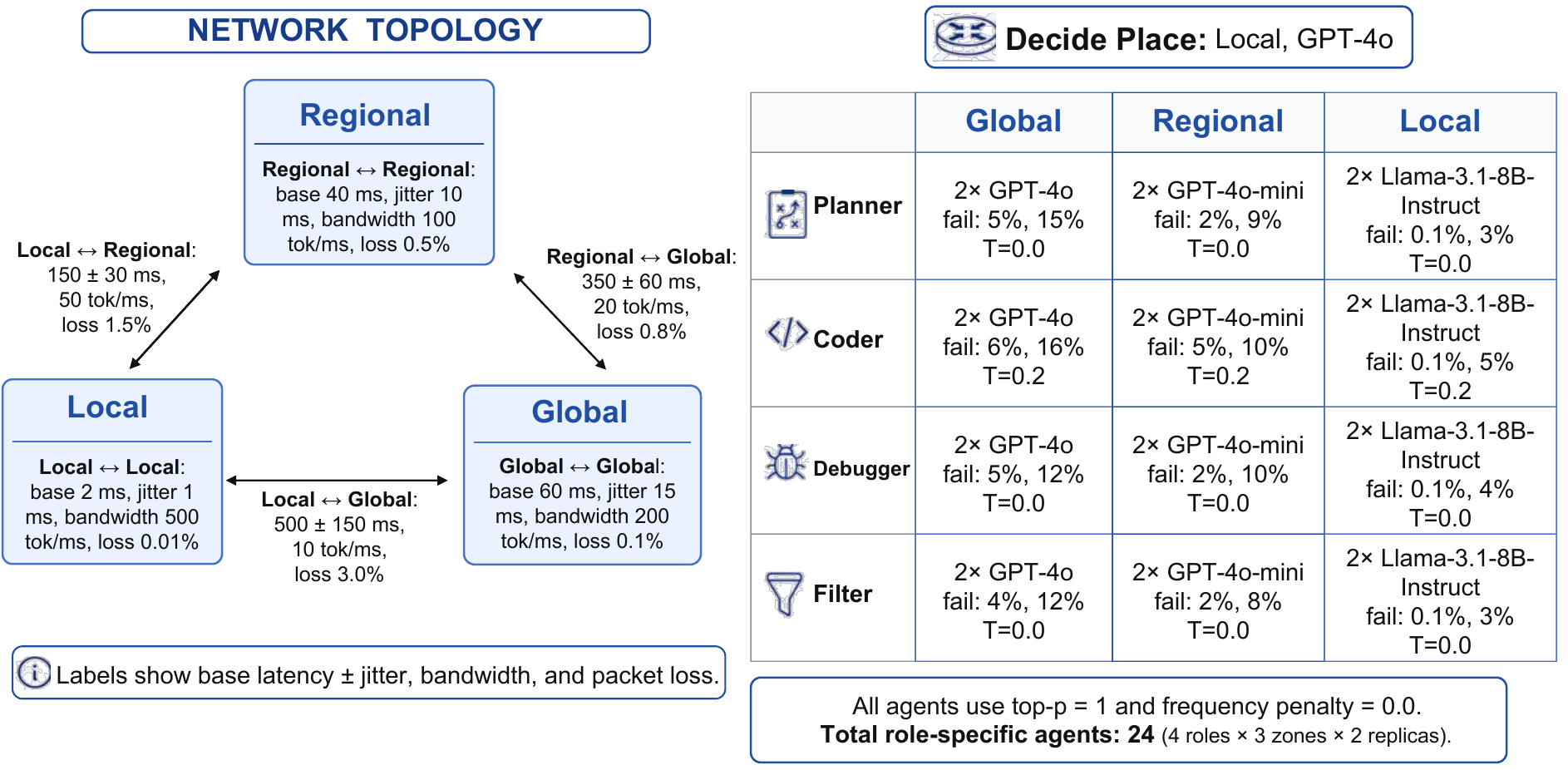}
    \caption{\textbf{Simulating environment overview.}}
    \label{fig:setting1}
\end{figure}

\subsubsection{Network Simulation Details}
\label{app:network_simulation}

We simulate inter-candidate communication using a \textbf{message-level network model}.
Specifically, we directly approximate the end-to-end
latency and failure behavior of transmitting a complete message (e.g., a list of
notebook cells) between candidates. 
This design follows the parameterized network-emulation
paradigm used in prior systems such as NIST Net and Linux NetEm, which models the network
behavior using delay, jitter, bandwidth constraints, and loss parameters
\cite{carson2003nist,hemminger2005netem}. 
It is also consistent with the view that, for distributed applications, the most important network properties are emergent end-to-end characteristics such as latency, bandwidth, packet loss, and jitter, while the full internal state of routers and switches is less directly relevant \cite{gouveia2020kollaps}.

The topology in Fig.~\ref{fig:setting1} and Table~\ref{tab:candidate_pool_default_network} is reported in simulator configuration units: latency in milliseconds and bandwidth in tokens/ms. The simulator computes message-level communication delay using these units. Before the resulting observations are used by HACO's effective-latency model in Eq.~\eqref{eq:effective-latency}, payload size is represented in bytes, bandwidth statistics are converted to bytes/s, and latency values are converted to seconds.

\paragraph{Network topology.}
We define a set of execution zones
\[
\mathcal{Z} = \{\texttt{Global}, \texttt{Regional}, \texttt{Local}\}.
\]
Each execution-zone pair is associated with a logical link whose properties are specified in
a topology matrix. Each link is parameterized by:
(i) base latency $\mu$ in milliseconds,
(ii) jitter standard deviation $\sigma$ in milliseconds,
(iii) bandwidth $B$ in tokens/ms, and
(iv) loss rate $p \in [0,1]$.

\paragraph{Message size.}
Given a message at step $i$ consisting of a list of notebook cells, we estimate its payload size
$M^{(i)}$ (in bytes) by serializing the cells into JSON.

\paragraph{Transmission delay.}
For a successful transmission at step $i$, the serialization/transmission delay is
\begin{equation}
T_{\mathrm{tx}}^{(i)} = \frac{M^{(i)}}{4B},
\end{equation}
where $B$ is the configured bandwidth in tokens/ms and we assume $1$ token $\approx 4$ bytes.

\paragraph{Path delay with jitter.}
The path delay is modeled as a base latency perturbed by zero-mean Gaussian noise:
\begin{equation}
T_{\mathrm{path}}^{(i)} = \max\!\left(T_{\min}(\mu),\ \mu + \xi^{(i)}\right),
\qquad
\xi^{(i)} \sim \mathcal{N}(0,\sigma^2),
\end{equation}
where $T_{\min}(\mu)$ is a latency floor that prevents unrealistic near-zero delays
under negative jitter realizations. In particular, we use a \emph{base-relative}
lower bound:
\begin{equation}
T_{\min}(\mu) = \max(\epsilon_{\rm floor},\ \zeta \mu),
\end{equation}
where $\epsilon_{\rm floor}$ is a small absolute lower bound and $\zeta \in (0,1)$ is a fixed
ratio. In our implementation, we use $\epsilon_{\rm floor}=1$ ms and $\zeta = 0.2$.
This design preserves the scale differences between short-range and long-range links:
for example, local links may have lower absolute floors than cross-region links,
while all links remain protected against invalid near-zero latency samples.

\paragraph{Communication failure and timeout.}
We model failures at the \emph{message level}. For each transmission at step $i$, a failure occurs
independently with probability $p$. We abstract away lower-layer retransmission and transport
protocol dynamics, and approximate the \emph{end effect} of unsuccessful communication at the
application level: a failed transmission incurs a fixed timeout penalty,
\begin{equation}
t_{\mathrm{net}}^{(i)} = T_{\mathrm{timeout}},
\end{equation}
where $T_{\mathrm{timeout}} = 10$ seconds in our implementation. This corresponds to application-level
request timeout behavior, such as an RPC timeout, and does not model packet-level network timeout.

\paragraph{Total communication latency.}
Otherwise (i.e., if the transmission succeeds), the communication latency is computed as
\begin{equation}
t_{\mathrm{net}}^{(i)} = T_{\mathrm{tx}}^{(i)} + T_{\mathrm{path}}^{(i)}.
\end{equation}

Equivalently, the complete model can be written as
\begin{equation}
t_{\mathrm{net}}^{(i)} =
\begin{cases}
T_{\mathrm{timeout}}, & \text{with probability } p,\\[4pt]
T_{\mathrm{tx}}^{(i)} + T_{\mathrm{path}}^{(i)}, & \text{with probability } 1-p,
\end{cases}
\qquad
\xi^{(i)} \sim \mathcal{N}(0,\sigma^2).
\end{equation}

\paragraph{Discussion.}
Our simulator is intentionally lightweight. It does not model packet-level mechanisms
such as retransmissions, congestion control, reordering, or queueing disciplines.
Instead, these effects are abstracted through stochastic delay and application-level
timeout penalties. The use of a base-relative latency floor further ensures that
jitter perturbs links realistically without collapsing long-range links into implausibly
small delays. This abstraction is sufficient for our setting, which focuses on
how heterogeneous inter-zone communication affects multi-agent coordination, while
leaving exact transport-protocol behavior outside the scope of the simulator.
\subsubsection{Candidate Pool Details}
\label{app:candidate_pool_details}

This section details the candidate-pool configurations used in the experiments.
The implementation loads role-specific candidates from the experiment
configuration and uses the active network-topology block for the topology-based
network simulator. The descriptions below focus on the resulting runtime
settings and omit configuration-file layout details.

\paragraph{Configurations for reliability under runtime uncertainty.}
The reliability-under-runtime-uncertainty experiments in
Fig.~\ref{fig:robustness} evaluate the degradation settings discussed in
Sec.~\ref{sec:experiments}: network degradation, candidate degradation, and
evaluator degradation. Each simulated setting changes one runtime dimension
after task index 50 while preserving the same role-invocation interface and MAS
workflow. Table~\ref{tab:robustness_configs} summarizes how each stress setting
differs from the overall benchmark performance configuration.

\begin{table*}[ht]
\caption{Configuration changes used for reliability evaluation under runtime
uncertainty. Each stress setting changes one axis of the runtime system:
candidate availability, network connectivity, or evaluator capability.}
\centering
\footnotesize
\setlength{\tabcolsep}{4pt}
\renewcommand{\arraystretch}{1.12}
\begin{tabularx}{\textwidth}{
>{\raggedright\arraybackslash}p{2.4cm}
>{\raggedright\arraybackslash}p{3.0cm}
>{\raggedright\arraybackslash}X}
\toprule
\textbf{Stress setting} & \textbf{Runtime setup} &
\textbf{Change relative to the overall benchmark performance configuration} \\
\midrule
Candidate degradation &
Global candidate outage &
Keeps the default network topology, but forces selected high-capability Global
candidates to fail. In particular, both Global planner replicas and one Global
coder replica are assigned failure rate 1.0. All other candidate definitions and
failure rates remain the same as in the overall benchmark performance pool. \\
\midrule
Network degradation &
Global-link outage &
Keeps the same 24 candidate definitions and candidate failure rates, but changes
the active network topology. All links involving the Global zone are made
unavailable with loss rate 1.0: Local--Global, Regional--Global, and
Global--Global. Local--Regional and Regional--Regional remain reachable with
60 ms / 10 ms / 150 tokens per ms / 0.003 loss and
15 ms / 4 ms / 220 tokens per ms / 0.0008 loss, respectively. \\
\midrule
Evaluator degradation &
Weak router/evaluator &
Keeps the overall benchmark performance candidate pool and default network
topology, but replaces the router/evaluator model with a Local
\texttt{meta-llama/llama-3.1-8b-instruct} instance. This setting stresses
methods whose final selection depends on evaluator quality, especially full
parallel generation followed by post-hoc scoring. \\
\bottomrule
\end{tabularx}
\label{tab:robustness_configs}
\end{table*}

\paragraph{Configuration for overall benchmark performance.}
The overall benchmark performance results in Fig.~\ref{fig:radar} compare HACO
with single-route, fixed-zone, and full-redundancy baselines under the
heterogeneous execution environment described in the main text. This setting
uses the full heterogeneous candidate pool. The pool contains four role types:
planner, coder, debugger, and filter. Each role has six candidate agent
instances: two Global candidates using \texttt{gpt-4o}, two Regional candidates
using \texttt{gpt-4o-mini}, and two Local candidates using
\texttt{meta-llama/llama-3.1-8b-instruct}. Thus, each execution zone contributes
two candidates per role and eight candidates in total across the four roles,
yielding 24 role candidates in the full pool. The router/evaluator agent is
placed in the Local zone and uses \texttt{gpt-4o}; it supports routing and
Best-of-\(N\) output scoring, but is not counted as a role candidate in the
pool size. All candidates use \texttt{top\_p}=1 and frequency penalty 0. The
planner, debugger, and filter roles use temperature 0.0, while the coder role
uses temperature 0.2. Table~\ref{tab:candidate_pool_main} summarizes the
simulated candidate pool used for the overall benchmark performance setting.

\begin{table*}[ht]
\caption{Simulated candidate pool for overall benchmark performance. Each failure rate corresponds to one replica in the same role-zone-model group.}
\centering
\footnotesize
\setlength{\tabcolsep}{4pt}
\renewcommand{\arraystretch}{1.08}
\begin{tabularx}{\textwidth}{
>{\raggedright\arraybackslash}p{1.8cm}
>{\raggedright\arraybackslash}X
>{\raggedright\arraybackslash}X
>{\raggedright\arraybackslash}X
>{\centering\arraybackslash}p{1.4cm}}
\toprule
\textbf{Role} &
\makecell[l]{\textbf{Global candidates}\\\texttt{gpt-4o}} &
\makecell[l]{\textbf{Regional candidates}\\\texttt{gpt-4o-mini}} &
\makecell[l]{\textbf{Local candidates}\\\texttt{Llama-3.1-8B}} &
\textbf{Temp.} \\
\midrule
Planner &
Failure rates 0.05, 0.15 &
Failure rates 0.02, 0.09 &
Failure rates 0.001, 0.03 &
0.0 \\
Coder &
Failure rates 0.06, 0.16 &
Failure rates 0.05, 0.10 &
Failure rates 0.001, 0.05 &
0.2 \\
Debugger &
Failure rates 0.05, 0.12 &
Failure rates 0.02, 0.10 &
Failure rates 0.001, 0.04 &
0.0 \\
Filter &
Failure rates 0.04, 0.12 &
Failure rates 0.02, 0.08 &
Failure rates 0.001, 0.03 &
0.0 \\
\bottomrule
\end{tabularx}
\label{tab:candidate_pool_main}
\end{table*}

\begin{table*}[ht]
\caption{Default simulated network topology. The simulator interprets each entry as a logical inter-zone link with stochastic delay, bandwidth, and message-level loss.}
\centering
\footnotesize
\setlength{\tabcolsep}{5pt}
\renewcommand{\arraystretch}{1.08}
\begin{tabular}{lcccc}
\toprule
\textbf{Link} & \textbf{Base latency (ms)} & \textbf{Latency Jitter (ms)} &
\textbf{Bandwidth (tokens/ms)} & \textbf{Loss rate} \\
\midrule
Local--Local & 2 & 1 & 500 & 0.0001 \\
Local--Regional & 150 & 30 & 50 & 0.015 \\
Local--Global & 500 & 150 & 10 & 0.03 \\
Regional--Regional & 40 & 10 & 100 & 0.005 \\
Regional--Global & 350 & 60 & 20 & 0.008 \\
Global--Global & 60 & 15 & 200 & 0.001 \\
\bottomrule
\end{tabular}
\label{tab:candidate_pool_default_network}
\end{table*}

For the main comparison, Random and BestOne select a single candidate from this
pool, MoA executes the full pool for the invoked role, and HACO selects a hedge
subset according to its reliability target. The fixed-zone baselines use
restricted variants of the same role structure. The Local baseline keeps one
Local Llama candidate per role with failure rate 0.001. The Regional baseline
keeps one Regional \texttt{gpt-4o-mini} candidate per role with failure rate
0.02. The Global baseline keeps one Global \texttt{gpt-4o} candidate per role
with failure rate 0.05. Therefore, each fixed-zone baseline uses four role
candidates in total, all from a single execution zone. These fixed-zone
configurations use the same default simulated network topology as
Table~\ref{tab:candidate_pool_default_network}.

\subsection{Real-World Azure Setting Details.}
\label{app:realworld_setting}

This section describes the real-world deployment used for the Azure-based HACO
experiments. Unlike the simulated experiments in
Appendix~\ref{app:network_simulation}, this setting does not sample latency from
a hand-specified topology. Instead, each candidate agent is backed by an Azure
OpenAI deployment in a concrete cloud region, and the runtime measures endpoint
round-trip time during execution.

\paragraph{Azure candidate pool.}
The deployment uses the same four MAS roles as the main experiments: planner,
coder, debugger, and filter. For each role, we instantiate one candidate in each
of four Azure regions, yielding 16 role candidates in total. The router/evaluator
is a separate Japan East deployment of \texttt{gpt-4.1-mini} and is not counted
as a role candidate. In real mode, the configuration loader canonicalizes the
Azure connection fields and dispatches these entries through the
\texttt{azure-openai-real-chat} backend. Table~\ref{tab:realworld_azure_pool}
summarizes the active model and region assignments; API keys and endpoint
credentials are omitted from the paper.

\begin{table*}[ht]
\caption{Azure-backed candidate pool used in the real-world bad-agent setting.
The Local/Regional/Global labels are experimental execution labels; the actual
network path is measured against the concrete Azure endpoint for each candidate.}
\centering
\footnotesize
\setlength{\tabcolsep}{5pt}
\renewcommand{\arraystretch}{1.08}
\resizebox{\textwidth}{!}{
\begin{tabularx}{\textwidth}{
>{\raggedright\arraybackslash}p{2.2cm}
>{\raggedright\arraybackslash}p{2.6cm}
>{\raggedright\arraybackslash}p{2.8cm}
>{\raggedright\arraybackslash}X
>{\raggedright\arraybackslash}p{2.5cm}}
\toprule
\textbf{Execution label} & \textbf{Azure region} & \textbf{Model} &
\textbf{Roles instantiated} & \textbf{Forced outage in stress config} \\
\midrule
Global & East US & \texttt{gpt-5.1-chat} &
Planner, coder, debugger, filter &
Planner, coder, and filter use \texttt{real\_failure\_rate}=1.0 \\
Global & Sweden Central & \texttt{gpt-4o} &
Planner, coder, debugger, filter &
Debugger uses \texttt{real\_failure\_rate}=1.0 \\
Regional & Southeast Asia & \texttt{gpt-4.1-nano} &
Planner, coder, debugger, filter &
None \\
Local & Japan East & \texttt{gpt-4.1-mini} &
Planner, coder, debugger, filter; router/evaluator &
None \\
\bottomrule
\end{tabularx}}
\label{tab:realworld_azure_pool}
\end{table*}

All deployments use Azure API version \texttt{2024-12-01-preview},
\texttt{top\_p}=1, and frequency penalty 0. The planner, debugger, filter, and
router/evaluator are run with temperature 0.0, while the coder role uses
temperature 0.2. The bad-agent configuration injects failures only through
\texttt{real\_failure\_rate} in real mode; candidates with value 1.0 fail before
the model request is issued and are logged as agent failures, while candidates
with value 0 remain available unless the actual Azure request fails.

\paragraph{Real network-latency measurement.}

For each real candidate invocation, the directly observed API latency is an
end-to-end latency: it includes client serialization, Internet and cloud routing,
Azure front-end handling, service queueing, model inference, response transfer,
and client-side response processing. Azure does not expose a per-request
decomposition that cleanly separates pure network delay from model-side
processing time. We therefore log this observed full-call latency as the real
candidate execution latency, while collecting an additional RTT signal to build
the network profile used by HACO.

Concretely, the runtime performs a lightweight sidecar probe to the target
candidate's Azure endpoint by issuing HTTP requests to the Azure
deployment-listing endpoint for that resource. For each probe, it measures
wall-clock round-trip time with a monotonic timer. The default probe
configuration uses three samples, a 5 second per-request timeout, a 30 second
cache TTL per endpoint and API version, and a 0.8 second fallback latency if
every probe attempt fails. Any HTTP response is sufficient for measuring RTT,
because the probe is used only to characterize transport reachability and
endpoint responsiveness, not generation quality.

The median successful probe time is recorded as the endpoint RTT observation for
that candidate's network profile. The runtime also serializes the transmitted
notebook cells to JSON and records the UTF-8 byte size as the payload size. In
real-probe mode, the source and target execution-zone labels are retained for
reporting, but the network profile is endpoint-based: it measures the
orchestrator-to-Azure round-trip time to the target candidate's region, rather
than a synthetic zone-to-zone delay. The experience log aggregates these RTT
observations into average RTT, RTT jitter, and probe failure rate for each
observed agent pair. During HACO selection, the router uses the average RTT when
available, or performs a fresh endpoint probe for cold-start candidates. This
RTT profile serves as the real-deployment analogue of the communication-latency
term in the simulated setting: it is combined with the candidate's historical
execution latency and then passed through the same bounded logarithmic latency
discount used in the main HACO objective.

\subsection{Compute Resources.}
\label{app:compute_res}
All experiments were executed in Dockerized environments (Docker 29.1.3) on a Linux server with AMD EPYC 9J14 CPUs (2 sockets, 96 cores per socket, 384 logical CPUs), 755\,GiB system memory, and NVMe storage (6.9\,TB total). 
Our method does not require local GPU training/inference; all model inference is performed via external LLM APIs, and the same API/deployment settings are used across compared methods for fairness. 
Although the host machine has GPUs available, they were not used in the reported experiments. 
Detailed runtime/token/cost statistics are already reported in Appendix Table~\ref{tab:main_results}.

\section{Algorithms and Baselines}
\subsection{HACO Algorithm}
\label{sec:algorithm}

Algorithm~\ref{alg:haco} summarizes HACO.
Here, $\tau$ specifies the target system reliability, $\lambda$ controls the exploration bonus in optimistic ranking, and $\gamma$ controls the conservative margin used in hedge-set stopping.
The controller mapping $\Phi$ takes the current query-context pair $(q,\mathcal{C}_t)$ and returns a role invocation event $(r_t,x_t)$.
The initial task context $\mathcal{C}_0$ is initialized from the original query $q$, input data, and an empty execution history.

\begin{algorithm}[t]
\small
\caption{HACO: Hedged Agent Computing}
\label{alg:haco}

\SetKwInOut{Input}{Input}
\SetKwInOut{Output}{Output}

\Input{Query $q$, target system reliability $\tau$, exploration $\lambda$, risk $\gamma$}
\Output{Final response}

Initialize task context $\mathcal{C}_0$\tcp*[r]{Initialize context}

\While{task not completed}{

    \phase{Phase 1: Role Abstraction}
    
    $(r_t, x_t) \gets \Phi(q, \mathcal{C}_t)$\tcp*[r]{Infer role invocation event}
    $\mathcal{A}_t \gets$ candidate agent instances for $r_t$\tcp*[r]{Retrieve candidates by role type}

    \phase{Phase 2: Redundancy Allocation}

    \ForEach{$a_i \in \mathcal{A}_t$}{
        $(\mu_i,\sigma_i) \gets \text{PosteriorStats}(\alpha_i,\beta_i)$\tcp*[r]{Posterior stats}
        $\rho_i \gets (1-f_{\mathrm{int}}^{(i)})(1-\ell_{\mathrm{net}}^{(i)})$\tcp*[r]{Execution success probability}
        $C_i \gets \ln\!\left(1+\frac{T_i}{T_0}\right)$\tcp*[r]{latency-aware resource cost}
        $D_i \gets 1 + \eta C_i$\tcp*[r]{Stable delay discount}
        $U_i^{+} \gets \frac{\min(1,\mu_i+\lambda\sigma_i)\rho_i}{D_i}$\tcp*[r]{Optimistic ranking}
    }

    Sort $\mathcal{A}_t$ by $U_i^{+}$ descending\tcp*[r]{Rank by optimistic utility}

    $\mathcal{S}_t \gets \emptyset$, $P_{\mathrm{fail}} \gets 1$\tcp*[r]{Initialize hedge}

    \ForEach{$a_i$ in sorted $\mathcal{A}_t$}{
        $R^{-}_{i} \leftarrow \min(0.98, \max(\epsilon_{\rm R}, \mu_i - \gamma\sigma_i)\cdot\rho_i)$\tcp*[r]{Conservative execution reliability}
        $\mathcal{S}_t \gets \mathcal{S}_t \cup \{a_i\}$\tcp*[r]{Add to hedge}
        $P_{\mathrm{fail}} \gets P_{\mathrm{fail}}(1-R_i^{-})$\tcp*[r]{Update failure}
        \If{$1-P_{\mathrm{fail}} \ge \tau$}{
            break\tcp*[r]{Meet target system reliability}
        }
    }

    \phase{Phase 3: Hedged Execution}

    Launch $\mathcal{S}_t$ on payload $x_t$ in parallel\tcp*[r]{Parallel launch}
    Monitor completions until $a_{\mathrm{winner}}$ is determined or all candidates fail\tcp*[r]{Decision-time latency}
    $\mathcal{S}_t^{+} \gets \{a_i \in \mathcal{S}_t \mid s_i = 1\}$\tcp*[r]{Successful candidates}
    \If{$\mathcal{S}_t^{+} = \emptyset$}{
        $\mathcal{C}_{t+1} \gets \mathrm{IntegrateFailure}(\mathcal{S}_t)$\tcp*[r]{Trigger recovery}
        \textbf{continue}
    }
    $a_{\mathrm{winner}} \gets \arg\max_{a_i \in \mathcal{S}_t^{+}} U_i^{+}$\tcp*[r]{Winner selection}
    $\mathcal{C}_{t+1} \gets \mathrm{Integrate}(a_{\mathrm{winner}})$\tcp*[r]{Context update}
    Collect remaining traces asynchronously\tcp*[r]{Experience harvesting}
    
    \phase{Phase 4: Experience Harvesting}
    
    \ForEach{$a_i \in \mathcal{S}_t$}{
        Record execution trace\tcp*[r]{$(q_i,s_i,t_i)$ + telemetry}
        Update candidate profile\tcp*[r]{Candidate statistics}
        Update link profile\tcp*[r]{Link statistics}
    }
}
\Return final response

\end{algorithm}

\paragraph{Hyperparameters.}
\label{app:hyperparameters}
Unless otherwise specified, all HACO results use 
$\tau=0.9$, $\lambda=1.1$, $\gamma=0.7$, $\kappa=2.0$, 
$\eta=0.5$, $\epsilon_{\rm B}=10^{-6}$, and $\epsilon_{\rm R}=0.1$.
Here, $\tau$ is the target system reliability, $\lambda$ is the optimistic-ranking coefficient, 
$\gamma$ is the conservative-margin coefficient, $\kappa$ is the bandwidth-jitter penalty, 
$\eta$ is the latency-discount strength, $\epsilon_{\rm B}$ is the bandwidth denominator floor, 
and $\epsilon_{\rm R}$ is the conservative reliability lower bound. 
Only $\tau$ is varied in the reliability-target sweep. For the BestOne baseline, we use cold-start threshold $n_{\min}=2$ and
epsilon-greedy exploration probability $p_{\rm exp}=0.1$.

\subsection{Properties of HACO's Redundancy Allocation}
\label{app:haco_properties}
\paragraph{Scope of the guarantee.}
The results in this section should be interpreted as conditional properties of the HACO allocation rule, with their validity depending on explicit modeling assumptions. In particular, the conservative success statement relies on two conditions: the candidate qualified-success estimates \(R_i^-\) are calibrated lower bounds on the corresponding true probabilities, and selected candidates are conditionally independent given the observed runtime state. Under these assumptions, HACO's accumulation rule provides a conservative certificate for the selected hedge set. When the assumptions are violated, the certificate should be viewed as an approximate reliability estimate, and empirical monitoring or recalibration is required.

We provide several basic properties of HACO's Phase-2 redundancy allocation rule.
Recall that candidates are sorted by descending optimism-adjusted utility
\(
U_i^+
\),
and then added sequentially using the conservative execution reliability estimate
\(
R_i^- \in [0, 0.98]
\),
with cumulative failure probability updated as
\(
P_{\mathrm{fail}} \leftarrow P_{\mathrm{fail}}(1-R_i^-)
\).
The algorithm stops once
\(
1-P_{\mathrm{fail}} \ge \tau
\).

\paragraph{Notation.}
Let the sorted candidate list be
\(
a_{(1)},a_{(2)},\dots,a_{(m)}
\),
where
\(
U_{(1)}^+ \ge U_{(2)}^+ \ge \cdots \ge U_{(m)}^+
\).
For the first \(k\) candidates in this order, define
\[
P_{\mathrm{fail}}^{(k)} := \prod_{j=1}^{k}(1-R_{(j)}^-),
\qquad
P_{\mathrm{succ}}^{(k)} := 1 - P_{\mathrm{fail}}^{(k)}.
\]
By convention,
\(
P_{\mathrm{fail}}^{(0)}=1
\)
and
\(
P_{\mathrm{succ}}^{(0)}=0
\).

\paragraph{Reachability assumption.}
Unless otherwise stated, the following guarantees are stated for the feasible
case in which the target reliability is conservatively reachable by the available
candidate pool:
\[
P_{\mathrm{succ}}^{(m)}
=
1-\prod_{j=1}^{m}(1-R^-_{(j)})
\ge \tau .
\]
If this condition does not hold, HACO returns the full candidate pool under the
given ordering. In that case, the conservative success estimate is
\(P_{\mathrm{succ}}^{(m)}\), and no method restricted to the same candidate pool
and conservative estimates can certify the target \(\tau\).

\begin{lemma}[Monotonicity of conservative accumulation]
\label{lem:monotonicity}
Assume \(R_{(j)}^- \in [0,1]\) for all \(j\).
Then \(P_{\mathrm{fail}}^{(k)}\) is nonincreasing in \(k\), and
\(P_{\mathrm{succ}}^{(k)}\) is nondecreasing in \(k\).
\end{lemma}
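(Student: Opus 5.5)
The plan is to compare consecutive partial products directly, using the recursive structure of \(P_{\mathrm{fail}}^{(k)}\). From the definition,
\[
P_{\mathrm{fail}}^{(k+1)} = P_{\mathrm{fail}}^{(k)}\,\bigl(1-R^-_{(k+1)}\bigr), \qquad k=0,1,\dots,m-1,
\]
with \(P_{\mathrm{fail}}^{(0)}=1\). This is exactly the update in Eq.~\eqref{eq:failure-update}, so the lemma reduces to a one-step comparison between \(P_{\mathrm{fail}}^{(k+1)}\) and \(P_{\mathrm{fail}}^{(k)}\).

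First, I would show by induction on \(k\) that \(P_{\mathrm{fail}}^{(k)}\in[0,1]\) for all \(k\). The base case is immediate since \(P_{\mathrm{fail}}^{(0)}=1\). For the inductive step, the assumption \(R^-_{(j)}\in[0,1]\) gives \(1-R^-_{(j)}\in[0,1]\). A product of a number in \([0,1]\) with a number in \([0,1]\) stays in \([0,1]\), so \(P_{\mathrm{fail}}^{(k+1)}\in[0,1]\).

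Second, I would use nonnegativity of \(P_{\mathrm{fail}}^{(k)}\) to compute
\[
P_{\mathrm{fail}}^{(k)} - P_{\mathrm{fail}}^{(k+1)} = P_{\mathrm{fail}}^{(k)}\,R^-_{(k+1)} \ge 0 .
\]
This shows \(P_{\mathrm{fail}}^{(k)}\) is nonincreasing in \(k\). Since \(P_{\mathrm{succ}}^{(k)} = 1-P_{\mathrm{fail}}^{(k)}\), negating immediately gives that \(P_{\mathrm{succ}}^{(k)}\) is nondecreasing. As a byproduct, \(P_{\mathrm{succ}}^{(k)}\in[0,1]\).

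There is no real obstacle. The only point needing care is that nonnegativity of the running product must be established first, since without it multiplying by a factor at most one would not guarantee a decrease. Finally, I would remark that in HACO the hypothesis holds automatically: the clipping in Eq.~\eqref{eq:conservative-reliability} gives \(R_i^-\le 0.98\), and \(\epsilon_{\rm R}>0\) together with \(\rho_i\ge 0\) gives \(R_i^-\ge 0\). So the lemma applies to every ordering the algorithm produces, including the \(U_i^+\)-induced one.
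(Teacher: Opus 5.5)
Your proposal is correct and follows the same route as the paper: use the one-step recursion \(P_{\mathrm{fail}}^{(k+1)} = P_{\mathrm{fail}}^{(k)}(1-R^-_{(k+1)})\) with \(1-R^-_{(k+1)}\in[0,1]\), then pass to \(P_{\mathrm{succ}}^{(k)}=1-P_{\mathrm{fail}}^{(k)}\). Your explicit induction showing \(P_{\mathrm{fail}}^{(k)}\ge 0\) supplies a step that the paper's proof uses only implicitly when it concludes \(P_{\mathrm{fail}}^{(k)}(1-R^-_{(k+1)})\le P_{\mathrm{fail}}^{(k)}\), so it is a small but genuine tightening.
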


\begin{proof}
Since \(R_{(k+1)}^- \in [0,1]\), we have \(1-R_{(k+1)}^- \in [0,1]\). Therefore
\[
P_{\mathrm{fail}}^{(k+1)}
=
P_{\mathrm{fail}}^{(k)}(1-R_{(k+1)}^-)
\le
P_{\mathrm{fail}}^{(k)}.
\]
Hence \(P_{\mathrm{fail}}^{(k)}\) is nonincreasing, and thus
\(
P_{\mathrm{succ}}^{(k)} = 1-P_{\mathrm{fail}}^{(k)}
\)
is nondecreasing.
\end{proof}

\begin{proposition}[Conditional conservative qualified-success certificate]
\label{prop:safe_termination}
Assume conditional independence across selected candidates, suppose that
\(R^-_i\) is a lower bound on the true qualified-success probability of candidate \(a_i\),
and assume that the target reliability is conservatively reachable, i.e.,
\(P_{\mathrm{succ}}^{(m)} \ge \tau\). If HACO terminates with hedge set \(S\),
then
\[
P(\text{qualified success}\mid S) \ge \tau .
\]
\end{proposition}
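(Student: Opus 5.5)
We argue in three steps: identify the set $S$ at which HACO stops, bound its failure probability using independence and the lower bounds $R^-$, and then read off the conclusion.

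\emph{Step 1 (what $S$ is).} Let $k^\star=\min\{k\in\{1,\dots,m\}: P_{\mathrm{succ}}^{(k)}\ge\tau\}$. The reachability assumption $P_{\mathrm{succ}}^{(m)}\ge\tau$ guarantees this set is nonempty, so $k^\star$ exists. By Lemma~\ref{lem:monotonicity}, $P_{\mathrm{succ}}^{(k)}$ is nondecreasing in $k$, so $P_{\mathrm{succ}}^{(k)}\ge\tau$ holds for every $k\ge k^\star$. The loop in Algorithm~\ref{alg:haco} checks $1-P_{\mathrm{fail}}\ge\tau$ after each addition and breaks at the first $k$ where it holds. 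Hence it halts exactly at $k^\star$, with
\[
S=\{a_{(1)},\dots,a_{(k^\star)}\}
\quad\text{and}\quad
1-\prod_{j=1}^{k^\star}\bigl(1-R^-_{(j)}\bigr)\ge\tau .
\]
The pool-exhaustion branch is never the binding exit here: either it coincides with $k^\star=m$, or it is not reached at all.

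\emph{Step 2 (bounding the failure probability).} Interpret qualified success of $S$ as the event that at least one selected candidate achieves qualified success. Write $p_j$ for the true qualified-success probability of $a_{(j)}$, given the observed runtime state. By the assumed conditional independence,
\[
P(\text{all of } S \text{ fail}\mid S)=\prod_{j=1}^{k^\star}\bigl(1-p_{(j)}\bigr).
\]
By hypothesis $p_{(j)}\ge R^-_{(j)}$, so each factor satisfies $0\le 1-p_{(j)}\le 1-R^-_{(j)}$. A product of nonnegative factors is monotone in each factor, which gives
\[
\prod_{j=1}^{k^\star}\bigl(1-p_{(j)}\bigr)\le\prod_{j=1}^{k^\star}\bigl(1-R^-_{(j)}\bigr)=P_{\mathrm{fail}}^{(k^\star)} .
\]

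\emph{Step 3 (conclusion).} Combining the two steps,
\[
P(\text{qualified success}\mid S)=1-\prod_{j=1}^{k^\star}\bigl(1-p_{(j)}\bigr)\ge 1-P_{\mathrm{fail}}^{(k^\star)}\ge\tau .
\]
The argument uses nothing about the $U^+$ ordering beyond the fact that it fixes a deterministic sequence. The certificate therefore holds for whatever order the optimistic ranking produces.

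The only delicate point is semantic rather than computational. We must be explicit that ``qualified success of $S$'' means the union event ``some candidate in $S$ qualifies,'' and that the independence is conditional on the runtime state, so the conditioning on $S$ does not bias the individual $p_j$. Phase~3 returns the $U^+$-maximal successful candidate, and we would add a remark that this candidate need not be the qualified one. The proposition certifies only that the hedge set contains a qualified success. Stating this interpretation up front is the main step that must be handled carefully; the remaining algebra is routine.
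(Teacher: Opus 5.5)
Your proof is correct and follows essentially the same route as the paper. Conditional independence turns the all-fail probability into $\prod_{a_i\in S}(1-p_i)$, the lower bounds $R^-_i\le p_i$ give $\prod_{a_i\in S}(1-p_i)\le P_{\mathrm{fail}}$, and reachability together with the stopping rule give $1-P_{\mathrm{fail}}\ge\tau$; your explicit identification of $k^\star$ is a more detailed version of the paper's final sentence, though the appeal to Lemma~\ref{lem:monotonicity} is not needed because the loop already breaks at the first index meeting the threshold.
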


\begin{proof}
Under conditional independence, the true probability that no selected candidate achieves qualified success is
\[
\prod_{a_i \in \mathcal{S}} (1-p_i),
\]
where \(p_i\) denotes the true qualified-success probability of candidate \(a_i\).
Since \(R_i^- \le p_i\), we have
\[
1-p_i \le 1-R_i^-,
\]
and therefore
\[
\prod_{a_i \in \mathcal{S}}(1-p_i)
\le
\prod_{a_i \in \mathcal{S}}(1-R_i^-)
=
P_{\mathrm{fail}}.
\]
Thus,
\[
\mathbb{P}(\text{qualified success} \mid \mathcal{S})
=
1-\prod_{a_i \in \mathcal{S}}(1-p_i)
\ge
1-P_{\mathrm{fail}}.
\]
By the reachability assumption and the stopping rule, HACO stops at some
\(k^\star \le m\) satisfying \(1-P_{\mathrm{fail}}\ge \tau\). Therefore the result follows.
\end{proof}

\begin{proposition}[Minimal feasible prefix under HACO ordering]
\label{prop:min_prefix}
Fix the ordering induced by descending \(U_i^+\), and assume
\(P_{\mathrm{succ}}^{(m)} \ge \tau\). Suppose HACO stops after selecting the first
\(k^\star\) candidates. Then the returned hedge set is exactly the shortest
prefix of this ordered list whose conservative success estimate reaches the
target:
\[
P_{\mathrm{succ}}^{(k^\star)} \ge \tau,\qquad
P_{\mathrm{succ}}^{(k)} < \tau \quad \text{for all } k<k^\star .
\]
\end{proposition}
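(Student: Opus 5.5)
The plan is to read the statement directly off the stopping rule of Algorithm~\ref{alg:haco}, using Lemma~\ref{lem:monotonicity} only to make sure the stopping index is well defined. Along the fixed $U_i^+$-ordering, the inner loop of Phase~2 adds $a_{(1)},a_{(2)},\dots$ one at a time. After the $k$-th addition, the running variable $P_{\mathrm{fail}}$ equals $P_{\mathrm{fail}}^{(k)}=\prod_{j=1}^{k}(1-R_{(j)}^-)$. I would verify this by a one-line induction on $k$: the base value is $1=P_{\mathrm{fail}}^{(0)}$, and each iteration multiplies by $1-R_{(k)}^-$ as in Eq.~\eqref{eq:failure-update}. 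Consequently, the test $1-P_{\mathrm{fail}}\ge\tau$ checked at iteration $k$ is precisely the test $P_{\mathrm{succ}}^{(k)}\ge\tau$.

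Next I would define $k_0:=\min\{k\in\{1,\dots,m\}: P_{\mathrm{succ}}^{(k)}\ge\tau\}$. This set is nonempty by the reachability assumption $P_{\mathrm{succ}}^{(m)}\ge\tau$, so $k_0$ exists. I would then argue that the algorithm stops exactly at $k^\star=k_0$. For every $k<k_0$ the test fails by minimality of $k_0$, so the break is not triggered and the loop proceeds to candidate $k+1$. The loop cannot exhaust the pool before reaching $k_0\le m$. At $k_0$ the test succeeds and the loop breaks, so the returned set is $\{a_{(1)},\dots,a_{(k_0)}\}$. This yields both displayed conditions: $P_{\mathrm{succ}}^{(k^\star)}\ge\tau$ and $P_{\mathrm{succ}}^{(k)}<\tau$ for all $k<k^\star$.

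Finally, I would note via Lemma~\ref{lem:monotonicity} that $P_{\mathrm{succ}}^{(k)}$ is nondecreasing. Hence $\{k:P_{\mathrm{succ}}^{(k)}\ge\tau\}$ is an up-set $\{k_0,\dots,m\}$, and the returned prefix is characterized as the shortest one in the strong sense that every longer prefix is also feasible. This monotonicity is not needed for the two displayed inequalities themselves. Monotonicity requires $R^-_{(j)}\in[0,1]$, which holds since Eq.~\eqref{eq:conservative-reliability} gives $R_i^-\in[0,0.98]$ whenever $\rho_i\in[0,1]$.

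The only delicate point is bookkeeping: the stopping test must be evaluated after every addition, including the first, with no skipped checks. The loop variable must also correspond exactly to the prefix product, so that the running quantity in the algorithm matches the prefix quantity in the statement. Once that correspondence is fixed by the induction in the first step, the rest is immediate.
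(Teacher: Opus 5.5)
Your proposal is correct and follows the same route as the paper, which simply observes that HACO scans the $U_i^+$-ordered list and breaks at the first index where $P_{\mathrm{succ}}^{(k)}\ge\tau$, so every earlier prefix falls below $\tau$. Your induction identifying the running $P_{\mathrm{fail}}$ with the prefix product, and your use of reachability to guarantee the break occurs before the pool is exhausted, only make explicit what the paper leaves implicit; the monotonicity remark is an optional extra not needed for the two displayed inequalities.
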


\begin{proof}
By construction, HACO scans the ordered list sequentially and stops at the first index
\(k^\star\) such that
\(
P_{\mathrm{succ}}^{(k^\star)} \ge \tau
\).
Therefore all earlier prefixes fail to meet the threshold, i.e.,
\(
P_{\mathrm{succ}}^{(k)} < \tau
\)
for \(k < k^\star\).
Hence the selected set is the shortest feasible prefix under the HACO ordering.
\end{proof}

\begin{proposition}[Monotonicity with respect to the system reliability target]
\label{prop:tau_monotone}
Fix the ordered candidate list and the conservative execution reliability estimates
\(\{R_{(j)}^-\}_{j=1}^m\), and assume
\(P_{\mathrm{succ}}^{(m)} \ge \tau_2\).
Let \(k^\star(\tau)\) denote the stopping index of HACO under target system reliability \(\tau\).
Then \(k^\star(\tau)\) is nondecreasing in \(\tau\):
if \(\tau_1 \le \tau_2\), then
\[
k^\star(\tau_1) \le k^\star(\tau_2).
\]
\end{proposition}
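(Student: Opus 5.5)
The argument combines the characterization of the stopping index in Proposition~\ref{prop:min_prefix} with the monotonicity of the prefix estimates in Lemma~\ref{lem:monotonicity}. The ordering and the estimates \(\{R_{(j)}^-\}\) are fixed, so the sequence \(P_{\mathrm{succ}}^{(k)}\), \(k=0,\dots,m\), does not depend on \(\tau\). For any target \(\tau\) with \(P_{\mathrm{succ}}^{(m)}\ge\tau\) we can therefore write
\[
k^\star(\tau)=\min\{k\in\{0,1,\dots,m\} : P_{\mathrm{succ}}^{(k)}\ge\tau\}.
\]
This set is nonempty by the reachability assumption. By Proposition~\ref{prop:min_prefix}, it is exactly the index at which HACO stops. (If \(\tau\le 0\), the minimum is \(0\); since the algorithm always adds at least one candidate, one may restrict to \(k\ge1\) without affecting the argument.)

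Now take \(\tau_1\le\tau_2\), and note that \(P_{\mathrm{succ}}^{(m)}\ge\tau_2\ge\tau_1\), so both stopping indices are well defined. Let
\[
K(\tau)=\{k : P_{\mathrm{succ}}^{(k)}\ge\tau\}.
\]
If \(k\in K(\tau_2)\), then \(P_{\mathrm{succ}}^{(k)}\ge\tau_2\ge\tau_1\), so \(k\in K(\tau_1)\). Hence \(K(\tau_2)\subseteq K(\tau_1)\). The minimum over a subset is at least the minimum over the superset, which gives \(k^\star(\tau_1)\le k^\star(\tau_2)\).

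An equivalent direct argument goes as follows. At index \(k^\star(\tau_2)\) we have \(P_{\mathrm{succ}}^{(k^\star(\tau_2))}\ge\tau_2\ge\tau_1\), so the scan under \(\tau_1\) has stopped at or before this index. Lemma~\ref{lem:monotonicity} makes this even more transparent: it shows that each \(K(\tau)\) is an upward-closed interval \(\{k^\star(\tau),\dots,m\}\), although the set-inclusion argument above does not actually need this.

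The only step needing care is the identification of HACO's stopping index with the minimum of \(K(\tau)\), including the convention that the scan starts at \(k=1\). Proposition~\ref{prop:min_prefix} already supplies this identification, so no real obstacle is expected.
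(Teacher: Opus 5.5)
Your proof is correct and follows essentially the same route as the paper: both identify $k^\star(\tau)$ as the shortest feasible prefix, and both observe that every prefix feasible for $\tau_2$ is also feasible for $\tau_1$, so the minimal feasible prefix for $\tau_2$ cannot be shorter. Your remark that Lemma~\ref{lem:monotonicity} is not actually needed for the set-inclusion step is accurate; the paper cites it, but the argument goes through without it. Your restriction to indices $k\ge 1$ is a harmless extra detail.
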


\begin{proof}
From Lemma~\ref{lem:monotonicity}, \(P_{\mathrm{succ}}^{(k)}\) is nondecreasing in \(k\).
For a lower threshold \(\tau_1\), HACO stops at the smallest prefix whose system reliability estimate
reaches \(\tau_1\). For a higher threshold \(\tau_2 \ge \tau_1\), any prefix feasible for
\(\tau_2\) is also feasible for \(\tau_1\), but not necessarily conversely. Therefore the minimal feasible
prefix for \(\tau_2\) cannot be shorter than that for \(\tau_1\).
\end{proof}

\begin{proposition}[Single-step overshoot bound]
\label{prop:overshoot}
Assume \(P_{\mathrm{succ}}^{(m)} \ge \tau\), and let \(k^\star\) be the stopping
index. Then HACO's conservative success estimate satisfies
\[
0 \le P_{\mathrm{succ}}^{(k^\star)}-\tau
\le
P_{\mathrm{fail}}^{(k^\star-1)} R^-_{(k^\star)} .
\]
\end{proposition}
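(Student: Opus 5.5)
The plan is to combine the characterization of the stopping index from Proposition~\ref{prop:min_prefix} with the one-step recursion for the accumulated failure probability. By the reachability assumption, HACO stops at a well-defined \(k^\star \le m\). The lower bound \(P_{\mathrm{succ}}^{(k^\star)} - \tau \ge 0\) is then exactly the stopping condition \(1-P_{\mathrm{fail}}^{(k^\star)} \ge \tau\), so the real content is the upper bound.

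For the upper bound, I will first handle the degenerate case \(k^\star = 0\). This cannot occur when \(\tau > 0\), because \(P_{\mathrm{succ}}^{(0)} = 0\). The algorithm also always adds at least one candidate before it checks the stopping condition. So I will take \(k^\star \ge 1\), which makes \(P_{\mathrm{fail}}^{(k^\star-1)}\) well defined, with value \(1\) when \(k^\star = 1\).

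Next I will write the increment from the product definition:
\[
P_{\mathrm{succ}}^{(k^\star)} - P_{\mathrm{succ}}^{(k^\star-1)} = P_{\mathrm{fail}}^{(k^\star-1)} - P_{\mathrm{fail}}^{(k^\star-1)}\bigl(1-R^-_{(k^\star)}\bigr) = P_{\mathrm{fail}}^{(k^\star-1)} R^-_{(k^\star)} .
\]
By the minimality part of Proposition~\ref{prop:min_prefix}, the prefix of length \(k^\star-1\) falls short of the target, so \(P_{\mathrm{succ}}^{(k^\star-1)} < \tau\). This inequality also holds when \(k^\star = 1\), since then \(P_{\mathrm{succ}}^{(0)} = 0 < \tau\), assuming \(\tau > 0\). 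I will state the mild convention \(\tau > 0\) explicitly; if \(\tau \le 0\) were allowed, the stopping rule would need to be read as stopping after the first candidate, and the bound would still follow with \(P_{\mathrm{succ}}^{(0)} = 0 \le \dots\), though that requires a small separate check.

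Finally, I will subtract \(\tau\) from the increment identity:
\[
P_{\mathrm{succ}}^{(k^\star)} - \tau = P_{\mathrm{succ}}^{(k^\star-1)} - \tau + P_{\mathrm{fail}}^{(k^\star-1)} R^-_{(k^\star)} < P_{\mathrm{fail}}^{(k^\star-1)} R^-_{(k^\star)} .
\]
This gives the claimed bound, in fact with strict inequality.

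There is no substantive obstacle here. The only care needed is the boundary case \(k^\star = 1\) and the requirement \(\tau > 0\), which I will note explicitly. Lemma~\ref{lem:monotonicity} is not strictly required, but it confirms that the bound is nonnegative, since \(R^-_{(k^\star)} \in [0, 0.98]\).
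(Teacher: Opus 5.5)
Your proof is correct and follows the paper's argument: you combine the increment identity $P_{\mathrm{succ}}^{(k^\star)}-P_{\mathrm{succ}}^{(k^\star-1)}=P_{\mathrm{fail}}^{(k^\star-1)}R^-_{(k^\star)}$ with the first-crossing property $P_{\mathrm{succ}}^{(k^\star-1)}<\tau\le P_{\mathrm{succ}}^{(k^\star)}$, and your explicit treatment of $k^\star=1$ and $\tau>0$ makes precise what the paper leaves implicit. One small correction to your aside: for $\tau<0$ the upper bound actually fails, since HACO stops at $k^\star=1$ and $P_{\mathrm{succ}}^{(1)}-\tau=R^-_{(1)}-\tau>R^-_{(1)}$, so $\tau\ge 0$ is genuinely needed (at $\tau=0$ the bound holds with equality).
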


\begin{proof}
We have
\[
P_{\mathrm{succ}}^{(k^\star)}
-
P_{\mathrm{succ}}^{(k^\star-1)}
=
P_{\mathrm{fail}}^{(k^\star-1)}R_{(k^\star)}^-.
\]
Since \(k^\star\) is the first index reaching the threshold,
\[
P_{\mathrm{succ}}^{(k^\star-1)} < \tau \le P_{\mathrm{succ}}^{(k^\star)}.
\]
Hence
\[
0 \le P_{\mathrm{succ}}^{(k^\star)}-\tau
\le
P_{\mathrm{succ}}^{(k^\star)}-P_{\mathrm{succ}}^{(k^\star-1)}
=
P_{\mathrm{fail}}^{(k^\star-1)}R_{(k^\star)}^-.
\]
\end{proof}

\subsection{Baselines}
\label{app:baseline}

We compare our method with three routing baselines, denoted as \textsc{Random}, \textsc{BestOne}, and \textsc{MoA-style Best-of-N} in the main text.

\paragraph{\textsc{Random}.}
This baseline uniformly samples one candidate from the candidate pool and executes only that candidate.
Its output is returned directly without any further reranking or model-based evaluation.

\begin{algorithm}[t]
\caption{\textsc{Random}}
\label{alg:random-router}
\KwIn{Query $x$, candidate set $\mathcal{A}$}
\KwOut{Final response $y$}
Uniformly sample one candidate $a$ from $\mathcal{A}$\;
Execute candidate $a$ on query $x$ and obtain response $y$\;
\Return{$y$}\;
\end{algorithm}

\paragraph{\textsc{BestOne}.}
This baseline performs single-candidate selection based on historical performance.
It uses the same LLM-judge quality signal as HACO for executed candidates, with failed or invalid executions assigned zero utility.
It prefers the candidate with the best past average utility, while retaining a small probability of selecting other candidates for exploration.
Only the selected candidate is executed and updated, and its response is returned as the final output.

\begin{algorithm}[t]
\caption{\textsc{BestOne}}
\label{alg:bestone-router}
\KwIn{Query $x$, candidate set $\mathcal{A}$, exploration probability $p_{\rm exp}$}
\KwOut{Final response $y$}
\ForEach{candidate $a \in \mathcal{A}$}{
    Estimate its historical average utility $u(a)$\;
    \If{candidate $a$ has no history}{
        Assign an optimistic default utility\;
    }
}
Let $a^\star$ be the candidate with the highest estimated utility\;
Draw a random number $r$ from $[0,1]$\;
\eIf{$r > p_{\rm exp}$}{
    Select $a^\star$\;
}{
    Randomly select one candidate from $\mathcal{A} \setminus \{a^\star\}$\;
}
Execute the selected candidate on query $x$ and obtain response $y$\;
\Return{$y$}\;
\end{algorithm}

\begin{algorithm}[t]
\caption{\textsc{Algorithm 4: MoA-style Best-of-N Router}}
\label{alg:moa-router}
\KwIn{Query $x$, candidate set $\mathcal{A}$}
\KwOut{Final response $y$}
Execute all candidates in $\mathcal{A}$ on query $x$ in parallel\;
Collect all candidate responses $\{y_a \mid a \in \mathcal{A}\}$\;
Use an evaluator language model to assign an output-quality score to each candidate response\;
Select the highest-scoring response as $y$\;
\Return{$y$}\;
\end{algorithm}

\paragraph{MoA-style Best-of-N.}
This baseline executes all candidates in parallel and collects all generated responses. A separate evaluator language model then assigns output-quality scores to the candidate outputs, and the highest-scoring response is selected as the final answer. We use ``MoA'' as a compact figure and table label for this MoA-style full-pool evaluator-based baseline.

\paragraph{Discussion.}
These baselines represent three distinct routing paradigms: random single-candidate routing, experience-guided single-candidate routing with limited exploration, and parallel candidate generation followed by model-based selection.
They provide simple reference points for evaluating the effectiveness-efficiency trade-off of routing strategies.

\section{More Experimental Details}
\label{app:main}

\subsection{Radar Axis Definitions}
\label{app:radar_metric_defs}
For each benchmark $b$ and method $a$, the radar in Fig.~\ref{fig:radar} is built from five raw metrics and then mapped to a common higher-is-better radial scale.

\paragraph{Effectiveness and strict success.}
Let $\mathcal{T}_{a,b}$ be tasks of method $a$ on benchmark $b$.
The first axis (``Avg. Score'' or ``Accuracy'') is the benchmark-specific mean effectiveness in percentage:
\[
M^{\text{eff}}_{a,b} =
\begin{cases}
\frac{1}{|\mathcal{T}_{a,b}|}\sum_{t\in\mathcal{T}_{a,b}} \text{score}_t, & b\in\{\text{MatplotBench},\text{DSBench}\},\\[3pt]
\frac{1}{|\mathcal{T}_{a,b}|}\sum_{t\in\mathcal{T}_{a,b}} \text{acc}_t, & b=\text{InfiAgent-Bench}.
\end{cases}
\]
The strict-success axis is:
\[
M^{\text{strict}}_{a,b} =
\begin{cases}
100\cdot\frac{1}{|\mathcal{T}_{a,b}|}\sum_{t}\mathbf{1}[\text{score}_t\ge 80], & b=\text{MatplotBench},\\[3pt]
100\cdot\text{question-level accuracy}, & b=\text{InfiAgent-Bench},\\[3pt]
100\cdot\frac{\#\{t:\text{score}_t>0\}}{|\mathcal{T}_{a,b}|}, & b=\text{DSBench}.
\end{cases}
\]

\paragraph{Raw cost and failure metrics.}
From end-of-run traces, we compute mean token usage and latency; LLM-judge tokens for $q_i$ are counted in cost metrics, while asynchronous judging does not add to blocking latency:
\[
M^{\text{tok}}_{a,b}=\frac{1}{1000}\cdot \frac{1}{|\mathcal{T}_{a,b}|}\sum_t \text{token\_count}_t,\qquad
M^{\text{lat}}_{a,b}=\frac{1}{|\mathcal{T}_{a,b}|}\sum_t \text{latency}_t.
\]
For reliability, let $\mathcal{E}_{a,b}$ be all observed task steps and let $\mathcal{B}_{t,s}$ be executed concurrent branches at step $s$ of task $t$.
With branch success indicator $z_{t,s,k}\in\{0,1\}$, the all-failed-step rate is
\[
M^{\text{fail}}_{a,b}
=100\cdot\frac{1}{|\mathcal{E}_{a,b}|}
\sum_{(t,s)\in\mathcal{T}_{a,b}}
\mathbf{1}\!\left[\sum_{k\in\mathcal{B}_{t,s}} z_{t,s,k}=0\right].
\]
Lower $M^{\text{fail}}_{a,b}$ means higher reliability.

\paragraph{``Token Eff.'', ``Latency Eff.'', and ``Reliability'' in the radar.}
For each benchmark and each axis $j$, let $x_{a,b,j}$ be the raw value above, and define method-wise extrema
$x^{\min}_{b,j}=\min_a x_{a,b,j}$, $x^{\max}_{b,j}=\max_a x_{a,b,j}$.
We then apply the same monotonic mapping as in plotting:
\[
\hat{x}_{a,b,j}=
\begin{cases}
\dfrac{x_{a,b,j}-x^{\min}_{b,j}}{x^{\max}_{b,j}-x^{\min}_{b,j}}, & \text{higher-is-better axes},\\[8pt]
\dfrac{x^{\max}_{b,j}-x_{a,b,j}}{x^{\max}_{b,j}-x^{\min}_{b,j}}, & \text{lower-is-better axes}.
\end{cases}
\]
Thus, ``Token Eff.'' and ``Latency Eff.'' are reversed versions of mean tokens and mean latency, and ``Reliability'' is the reversed all-failed-step rate. The implementation adds a small per-axis padding before radius projection for visual spacing, but this does not change the ordering of methods.

\paragraph{Raw overall benchmark results.}
Table~\ref{tab:main_results} reports the raw numerical results used to construct
the radar comparison in Fig.~\ref{fig:radar}. We report two effectiveness metrics for each benchmark: Avg. Score / Acc. and Strict Succ. (\%).
Because the benchmarks use different evaluation protocols, the exact definitions are benchmark-specific.

For MatplotBench, Avg. Score / Acc. is the mean evaluator score over the plotting tasks. Strict Succ. (\%) is SR@80, which counts a task as successful if its evaluator score is at least 80.

For InfiAgent-Bench, Avg. Score / Acc. is the proportional sub-question accuracy using all benchmark questions as the denominator. Each question receives partial credit according to the fraction of its sub-questions answered correctly, while unevaluated questions are counted as zero. Strict Succ. (\%) is the question-level success rate using all benchmark questions as the denominator; a question is counted as successful only when all of its sub-questions are answered correctly.

For DSBench, Avg. Score / Acc. is the average normalized task score over all tasks, where task performance is normalized relative to the baseline and ground-truth performance. Tasks with missing or invalid outputs are counted as zero. Strict Succ. (\%) is the percentage of tasks with a positive normalized score, meaning the method improves over the baseline on that task.

These benchmark-specific definitions also explain why the relative magnitude of
Avg. Score / Acc. and Strict Succ. (\%) differs across benchmarks. On InfiAgent-Bench,
partial sub-question credit can make Avg. Score / Acc. higher than Strict Succ. On
DSBench, many tasks may obtain positive but small normalized scores, so Strict Succ.
can be higher than Avg. Score / Acc. On MatplotBench, the relation depends on the
distribution of evaluator scores around the SR@80 threshold. Therefore, the two
columns should be interpreted according to each benchmark's metric definition rather
than compared using a universal ordering.

\subsection{Note on MoA-style Baseline.}
MoA-style baseline's final score depends on post-hoc evaluator selection over heterogeneous candidate outputs. 
On InfiAgent-Bench, tasks are relatively easy and candidate outputs are often close, so full-pool execution brings limited marginal gain. 
On DSBench, tasks are more difficult and the pool may not contain a clearly superior output. 
On MatplotBench, diverse visualization programs and layouts further increase evaluator-selection noise, so MoA improves step-level availability but does not necessarily achieve the best task-level score.

\begin{table*}[t]
    \centering
    \small
    \caption{Detailed information about overall benchmark performance results. Effectiveness metrics are dataset-specific. ``Strict Succ.'' denotes SR@80 on MatplotBench and question-level accuracy on InfiAgent-Bench. ``All-Failed Step Rate'' is computed as the proportion of observed task steps for which all executed concurrent branches fail.}
    \label{tab:main_results}
    \resizebox{\textwidth}{!}{
    \begin{tabular}{llccccc}
        \toprule
        \multirow{2}{*}{\textbf{Dataset}} & \multirow{2}{*}{\textbf{Method}} &
        \multicolumn{2}{c}{\textbf{Effectiveness}} &
        \multicolumn{2}{c}{\textbf{Efficiency}} &
        \multicolumn{1}{c}{\textbf{Reliability}} \\
        
        \cmidrule(lr){3-4} \cmidrule(lr){5-6} \cmidrule(lr){7-7}
        
        & & Avg. Score / Acc. & Strict Succ. (\%) & Tokens (k) & Latency (s) & All-Failed Step Rate (\%) \\
        \midrule
        
        \multirow{7}{*}{MatplotBench}

            & random       & 46.45 & 27.00 & 46.51 & 82.69 & 8.78 \\
            & bestone      & 47.75 & 25.00 & 32.22 & 52.50 & 9.35 \\
            & moa          & 42.75 & 23.00 & 305.67 & 206.41 & 0.00 \\
            & fix-local    & 23.90 & 11.00 & 43.55 & 54.42 & 0.28 \\
            & fix-regional & 47.68 & 25.00 & 64.91 & 50.74 & 3.01 \\
            & fix-global   & 52.61 & 36.00 & 16.29 & 28.62 & 8.25 \\
            & haco         & 58.86 & 36.00 & 65.32  & 23.50 & 0.17 \\
            
        \midrule

        \multirow{4}{*}{InfiAgent-Bench}

            & random       & 86.06 & 83.27 & 25.82 & 34.87 & 8.65 \\
            & bestone      & 77.82 & 75.49 & 25.33 & 22.69 & 13.94 \\
            & moa          & 84.79 & 80.93 & 204.02 & 99.82 & 0.00 \\
            & fix-local    & 69.42 & 64.98 & 52.37 & 132.47 & 0.13 \\
            & fix-regional & 87.86 & 84.44 & 35.81 & 25.82 & 3.56 \\
            & fix-global   & 4.28 & 4.28 & 14.27 & 13.23 & 8.59 \\
            & haco         & 89.01 & 84.44 & 91.04 & 31.31 & 0.37 \\

        \bottomrule

        \multirow{4}{*}{DSBench}

            & random       & 31.83 & 58.90 & 89.54 & 254.04 & 6.95 \\
            & bestone      & 41.84 & 73.61 & 98.90 & 107.41 & 11.14 \\
            & moa          & 41.09 & 71.23 & 711.73 & 423.41 & 0.11 \\
            & fix-local    & 4.90 & 10.96 & 103.52 & 546.77 & 0.00 \\
            & fix-regional & 39.79 & 67.12 & 88.39 & 143.46 & 3.53 \\
            & fix-global   & 33.02 & 61.64 & 60.28 & 75.21 & 7.64 \\
            & haco         & 42.40 & 76.71 & 393.72 & 125.91 & 0.40 \\
            
        \bottomrule

    \end{tabular}}
\end{table*}

Fig.~\ref{fig:analysis} further analyzes these raw results by comparing the
distribution of latency and token usage, together with task effectiveness and
step-level failure rate, across all baselines and HACO.
\begin{figure}[ht]
    \centering
    \includegraphics[width=1.0\textwidth]{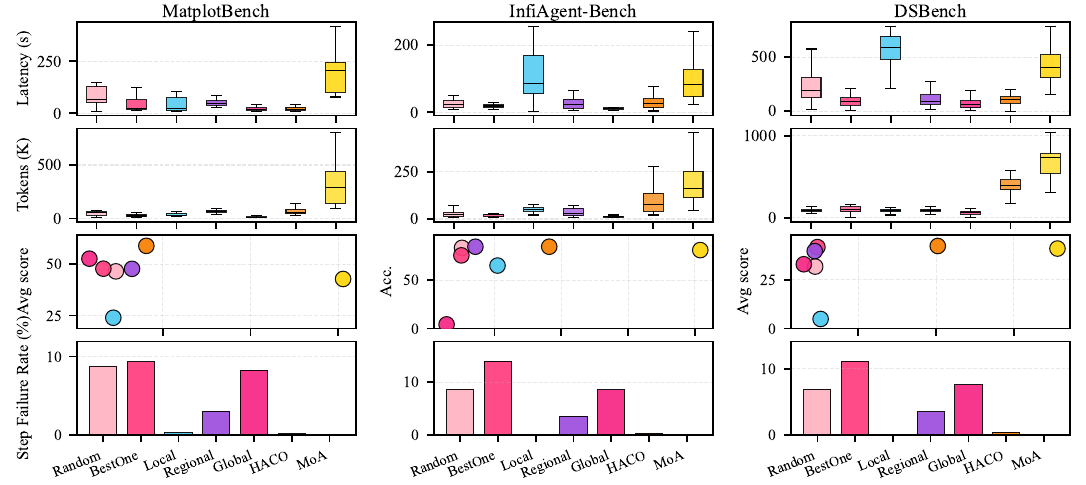}
    \caption{\textbf{More analysis of raw benchmark results.}
    The figure compares latency, token usage, task effectiveness, and step-level
    failure rate for baselines and HACO across MatplotBench, InfiAgent-Bench,
    and DSBench.}
    \label{fig:analysis}
\end{figure}

\subsection{Note on fixed-global in InfiAgent-Bench.}
\label{app:note}
The low InfiAgent-Bench score of the fixed-global baseline mainly reflects its low valid-output completion rate, while completed outputs remain of reasonable quality.
When fixed-global produces a valid final output, the judge usually assigns a full score, but many runs fail to produce a valid evaluable answer and are counted as zero in the aggregate accuracy. 
The reported all-failed-step rate is a step-level execution-availability metric and does not directly measure task-level valid-output completion. 
Thus, the anomalously low average score reflects completion failures under the fixed global route, not a systematic quality drop of the underlying GPT-4o model.

\subsection{Execution Environment-Adjusted Reference to External Baselines on MatplotBench}
\label{app:external_sota_estimate}

To provide an external reference, we compare HACO with strong results reported in the original DATAWISE paper~\cite{2025datawiseagent} based on MatplotBench. 
Those results were obtained under the original benchmark setting, which does not model the practical heterogeneous runtime conditions considered in this work. We therefore report them as \textit{idealized execution environment-free} scores and provide an \textit{execution environment-adjusted reference} under our execution setting.

The adjustment is calibrated using two measured anchor points. The first anchor is the DATAWISE w/ visual tool score reported in~\cite{2025datawiseagent}. The second anchor is the best fixed-zone MatplotBench score measured in our heterogeneous execution setting, where the same type of agent workflow is affected by the runtime conditions used in our experiments. 
We apply this calibration to the external methods to approximate the performance decrease caused by our network-aware execution setting. The HACO row is directly observed in our experiments. Thus, 
Table~\ref{tab:external_sota_matplot_adjusted} provides a contextual comparison that approximates how real-world execution environment factors may affect results originally reported under idealized execution environment-free benchmark settings.

\begin{table}[t]
\centering
\small
\caption{
External reference on MatPlotBench under idealized execution environment-free and
execution environment-adjusted settings. 
The idealized scores are taken from DATAWISE/DatawiseAgent~\cite{2025datawiseagent};
the adjusted scores are estimated from those reported results under our heterogeneous
execution-environment setting. For DATAWISE, we also report the result after adding
HACO under the same adjusted setting.
}
\label{tab:external_sota_matplot_adjusted}
\begin{tabular}{lcc}
\toprule
Method / Reference &
\shortstack{Idealized execution \\ environment-free setting} &
\shortstack{Execution \\ environment-adjusted setting} \\
\midrule
DATAWISE w/ visual tool~\cite{2025datawiseagent} &
\textbf{64.33} &
\begin{tabular}{@{}cc@{}}
52.61 & \textbf{58.86} \\
{\small (w/o HACO)} & {\small (w/ HACO)}
\end{tabular} \\
AutoGen w/ visual tool~\cite{wu2024autogen} & 63.60 & 52.0 \\
DATAWISE~\cite{2025datawiseagent} & 61.22 & 50.1 \\
MatplotAgent~\cite{yang2024matplotagent} & 57.86 & 47.3 \\
Direct Decoding~\cite{2025datawiseagent} & 45.28 & 37.0 \\
\bottomrule
\end{tabular}
\end{table}

\section{Limitations}

    \paragraph{Task scope.}
    Our validation instantiates HACO on a data-analysis-oriented agent workflow and evaluates it on benchmarks covering data science, structured data analysis, and scientific visualization. These workflows provide controlled long-horizon testbeds with code execution, failure recovery, and measurable quality signals. However, HACO is a general runtime allocation layer that can be applied to any agent-based system exposing role invocation events and candidate agent instances. Future work should validate HACO on broader workflows, such as web automation, retrieval-intensive agents, and distributed intelligent systems, where the execution-environment factors may become more pronounced because communication links among edge devices can fluctuate more strongly.
    \paragraph{Reliability-target tuning.}
    HACO treats the target reliability $\tau$ as an explicit operating parameter for the reliability--cost trade-off. In practice, $\tau$ should be tuned according to application requirements, budget, and latency constraints. Future work can further study automatic target selection and adaptive reliability scheduling across tasks with different risk levels.

    \paragraph{Feedback signals for experience harvesting.}
    The current experience harvesting module uses operational feedback such as quality, success, latency, token usage, and network statistics. Future work can incorporate richer semantic feedback, including error types, reasoning traces, code-level failure patterns, and evaluator rationales. In addition, we plan to introduce temporal adaptation mechanisms, such as forgetting factors, temporal discounting, sliding-window updates, or drift detection, so that the Beta-based capability profiles can respond more quickly to recent observations and better track non-stationary candidate behavior.
\section{Broader Impacts}
\label{app:broader_impacts}
This work studies reliability control for LLM-based agent systems under heterogeneous runtime conditions. Positive impacts are that more reliable role invocation can reduce failures in long-horizon workflows, improve service continuity under regional or network degradation, and lower unnecessary token and latency costs compared with exhaustive parallel execution. These properties can benefit practical agent deployments in domains such as data analysis, engineering assistance, and scientific workflows, where unstable execution can otherwise waste resources or interrupt user-facing services.

Potential negative impacts should also be considered. Improving execution reliability may make agent systems easier to deploy at scale, which can amplify harms already associated with LLM systems, such as broader resource consumption from increased inference activity. In addition, system-level robustness does not guarantee factual correctness, fairness, privacy preservation, or resistance to malicious use, so more reliable execution could still propagate harmful outputs if underlying models or tools fail in these dimensions.
We therefore view HACO as a systems mechanism. It could be deployed with application-level safeguards such as human oversight in high-stakes settings, usage restrictions consistent with provider policies, and monitoring of cost, latency, and misuse patterns.

\end{document}